\newtheorem{theorem}{Theorem}
\newtheorem{claim}{Claim}
\newcommand{\Comps}{\mathbb{C}}
\newcommand{\rank}{\mathrm{rank}\,}
\newcommand{\diag}{\mathrm{diag}\,}
\def\be{\begin{equation}}
\def\ee{\end{equation}}
\def\bearn{\begin{eqnarray*}}
\def\eearn{\end{eqnarray*}}
\def\bear{\begin{eqnarray}}
\def\eear{\end{eqnarray}}
\def\barr{\begin{array}}
\def\earr{\end{array}}
\newcommand{\jylb}[1]{\textcolor{red}{#1}}
\renewcommand{\jylb}[1]{\textcolor{black}{#1}}
\newcommand{\md}[1]{\textcolor{red}{#1}}
\renewcommand{\md}[1]{\textcolor{black}{#1}}
\def\bmat{\left(\begin{array}}
\def\emat{\end{array}\right)}
\newtheorem{lem}{Lemma}
\begin{document}

\title{Security Measures for Grids against Rank-1 Undetectable Time-Synchronization Attacks}
\author{Marguerite Delcourt \textit{Student Member, IEEE}, Jean-Yves Le Boudec \textit{Fellow, IEEE}
\thanks{M. Delcourt and J-Y. Le Boudec are with the School of Computer and Communication Sciences of the Swiss Federal Institute of Technology Lausanne, EPFL, Switzerland. e-mail: \{marguerite.delcourt$\vert$ jean-yves.leboudec\}@epfl.ch.}}
\maketitle

\begin{abstract}
Time-synchronization attacks on phasor measurement units (PMU) pose a real threat to smart grids; it was shown that they are feasible in practice and that they can have a non-negligible negative impact on the state estimation, without triggering the bad-data detection mechanisms. Previous works identified vulnerability conditions when targeted PMUs measure a single phasor. Yet, PMUs are capable of measuring several quantities. We present novel vulnerability conditions in the general case where PMUs measure any number of phasors and can share the same time reference. One is a sufficient condition that does not depend on the measurement values. We propose a security requirement that prevents it and provide a greedy offline algorithm that enforces it. If this security requirement is satisfied, there is still a possibility that the grid can be attacked, although we conjecture that it is very unlikely. We identify two sufficient and necessary vulnerability conditions which depend on the measurement values. For each, we provide a metric that shows the distance between the observed and vulnerability conditions. We recommend their monitoring for security. Numerical results, on the IEEE-39 bus benchmark with real load profiles, show that the measurements of a grid satisfying our security requirement are far from vulnerable.
\end{abstract}

\section{Introduction}

Accurate system estimation is a crucial element for the control and operation of smart grids. Such an estimation relies on measurements of electrical quantities, taken at various grid locations. The advancing technology of phasor-measurement units (PMU) enables the use of synchrophasors with fast streaming rates~\cite{RP14,TSG13}. The precision of the estimation from phasors depends on the time synchronization of the PMUs~\cite{BFMP15}, which can be achieved through either a space-based protocol such as GPS synchronization~\cite{CLMS08} or a network-based protocol~\cite{LBFMPM12} such as White Rabbit~\cite{DRWP20}. In both cases, it is vulnerable to cyber attacks~\cite{SBDSF19}. The GPS synchronization of PMUs is vulnerable to GPS spoofing attacks~\cite{Jiang2013GPS} and the network-based protocols are vulnerable to the insertion of a delay box between the PMUs and their master clocks~\cite{Barreto2016}. Network-based protocols assume that the transmission time of a packet between a PMU and its master clock is symmetric (i.e. the time to send packets is the same in both directions), or has a known asymmetry. A delay box breaks this symmetry and introduces a time offset in the time reference of the targeted PMU. These are physical attacks that are neither prevented nor detected by the cryptographic tools used by the synchronization protocol. As a result, the phase of phasor measurements is shifted, which can non-negligibly impact the state estimation of the system, as it was shown in~\cite{rank1,SDBDBP19}.

In order to make the state-estimation process robust, it is customary to couple the state-estimation algorithm with a bad-data detection (BDD) algorithm such as the largest normalized residual test or the chi-squared test~\cite{abur2004power}. However, it was shown in~\cite{Liu2009} that false-data injection attacks can have a non-negligible impact on the state estimation, without triggering any reaction from the BDD algorithms. Subsequent works focused on undetectable attack strategies~\cite{Kosut2011}, on vulnerability identification~\cite{Dan2010,Anwar2015} and on the mitigation of such attacks~\cite{Vukovic2012,5751206}. 

Another type of attack that can impact the state estimation without triggering any reaction from the BDD algorithms, is to alter the time reference of PMUs. Unlike with false-data injection attacks, the data is never forged nor modified. This can be done remotely via GPS spoofing~\cite{Zhang2013} or with delay boxes~\cite{Barreto2016}. The authors of~\cite{rank1} propose a strategy to compute undetectable time-synchronization attacks (TSAs) against vulnerable pairs of PMUs, each measuring a single phasor. They also discuss how vulnerable pairs of PMUs can be targeted simultaneously in order to maximize an attack. Their techniques require that a specific attack-angle matrix is of rank equal to $1$, we refer to such attacks as rank-1 TSAs. 

The authors of~\cite{SDBDBP19} give a similar technique to find and undetectably attack a set of at least two vulnerable PMUs, each measuring a single phasor. They also show that when targeting at least three PMUs, the solution set of undetectable-attack offsets forms a continuum. They use this property to overcome constraints posed by clock controllers that prevents too large offsets. Their new attack strategies are to target at least three PMUs with gradually increasing offsets; this, in time, maximizes the attacker's objective. 

In~\cite{SDBDBP19} a sufficient and necessary vulnerability condition is identified for sets of PMUs with different time references, each measuring a single phasor. Their theory shows that rank-1 TSAs are feasible on a set of PMUs if and only if the index of separations (IoS) of the attack-angle matrices computed for each pair of PMUs are equal to $1$. The IoS is explained in Section~\ref{subsec:und}. This condition depends on the measurement values and enables them to find all vulnerable sets of PMUs that measure a single phasor. They also identified a sufficient vulnerability condition that does not depend on the measurements. They showed that if the infimum of the IoS, over all possible measurement values, is equal to $1$ for all pairs of PMUs in the targeted set, then rank-1 TSAs are always feasible.

The attack strategies and vulnerability conditions proposed in~\cite{rank1} and~\cite{SDBDBP19} require that each offset alters a single PMU that measures a single phasor. Yet, PMUs are capable of measuring several phasors, and several PMUs can share the same time-reference. In order to understand and mitigate rank-1 TSAs on real grids, it is important to generalize the theory established in~\cite{SDBDBP19} to sets of PMUs that possibly share the same time reference and that measure an arbitrary number of phasors. Also, the techniques in~\cite{SDBDBP19} all rely on a complex verification matrix corresponding to the least squares (LS) residuals when in fact the most commonly used state estimation technique is the weighted least squares (WLS) estimator. For exact attackability, the conditions are equivalent with the LS and the WLS matrices. However, for the establishment of vulnerability metrics, the use of the WLS matrices is more accurate because the WLS and LS residuals can be different. Due to the fact that the noise of phasors does not have circular symmetry, we cannot use the complex system model of~\cite{SDBDBP19} to compute the WLS estimates, we introduce the system model in rectangular coordinates. \md{In this paper, we establish the exact attackability conditions of grids with sets of phasors measured by PMUs that share the same time reference. For these exact conditions, we use the system model in complex form, i.e. the LS verification matrix, because it leads to a structural vulnerability condition that does not depend on the measurement values. Such a structural condition cannot be derived using the WLS matrices. We then use the system model in rectangular coordinates in order to provide vulnerability metrics that are linked to the WLS residuals.}

We group PMUs, each measuring one or more phasors, in sites if they share the same time-reference. Hence, an offset on the time-reference of a site impacts all of the corresponding phasor measurements in the same manner. Our first contribution is to identify a sufficient and necessary vulnerability condition for rank-1 TSAs that target a single site that measures an arbitrary number of phasors: item (b) of Theorem~\ref{th:single}. We also provide a metric that reflects how vulnerable a site is at a given time.

If two sites are not vulnerable by themselves, they could still be vulnerable as a pair. Our second contribution is to identify vulnerability conditions to rank-1 TSAs, for such pairs of sites. One is a sufficient condition that does not depend on the measurements: item (b) of Theorem~\ref{th:double}. The other is a sufficient and necessary general condition that depends on the measurement values: item (a) of Theorem~\ref{th:double}. We provide a second metric that reflects how vulnerable a pair of sites is at a given time.
 
 Our third contribution is to show that rank-1 TSAs, on a set of more than two sites, are feasible if and only if they are feasible for any pair of sites within the set. 
 
Finally, our fourth contribution is to mitigate the feasibility of rank-1 TSAs on a grid by combining our first three results. We establish a security requirement to prevent vulnerabilities identified by item (b) of Theorem~\ref{th:double}. Hence, to prevent vulnerabilities that do not depend on the measurements. We provide a greedy recursive algorithm that enforces our requirement, it takes as input the measurement points of an observable grid and it outputs a larger set of measurement points corresponding to the input set and additional points. A grid satisfying our requirement can still be vulnerable if it satisfies item (b) of Theorem~\ref{th:single} or item (a) of Theorem~\ref{th:double}. Although we conjecture that this is unlikely to occur, we still recommend the monitoring of the two provided metrics.

The paper is structured as follows. Section~\ref{sec:notation} defines the system model, the attack model and explains the undetectability conditions. Sections~\ref{sec:single} and ~\ref{sec:double} give the exact undetectability conditions for single sites and pairs of sites, respectively. They also provide vulnerability metrics, discuss the feasibility of the vulnerability conditions and compare our new conditions with the ones of previous papers. Section~\ref{sec:arbitrary} proves our third contribution. Section~\ref{sec:att} provides security measures to mitigate the feasibility of rank-1 TSAs on any observable grid. An algorithm to secure grids against structural vulnerabilities is also provided in Section~\ref{sec:att}. Numerical results, on the IEEE-39 bus benchmark with real load profiles from the Lausanne grid, are given in Section~\ref{sec:sim}. They show that the measurements of a grid satisfying our security requirements are far from satisfying the identified vulnerability conditions. Finally, Section~\ref{sec:ccl} concludes the paper.

\begin{table}
\caption{Notation used in this paper}
\begin{tabularx}{\columnwidth}{@{}XX@{}}
\toprule
  Indices \\
  $m$ & Number of PMU measurements \\
  $n$ & Number of buses \\
  $q$ & Number of attacked time references \\
  $p$ & Number of attacked measurements \\
  Parameters \\
  $z= (z_{\Box}^{[1,m]}+jz_{\Box}^{[m+1,2m]})^T \in \mathbb{C}^m$   & Complex measurement vector  \\
  $z_{\Box} =(\text{Re}(z), \text{Im}(z))^T\in \mathbb{R}^{2m}$   & Measurement vector in rectangular coordinates \\
  $x= (x_{\Box}^{[1,n]}+jx_{\Box}^{[n+1,2n]})^T \in \mathbb{C}^n$   & Complex state vector  \\
  $x_{\Box}=(\text{Re}(x), \text{Im}(x))^T \in \mathbb{R}^{2n}$   & State vector in rectangular coordinates \\
  $H \in \mathbb{C}^{m \times n}$ & Complex measurement-to-state matrix \\
  $H_{\Box} \in \mathbb{R}^{2m \times 2n}$ & Measurement-to-state matrix in rectangular coordinates \\
  $e \in \mathbb{C}^m$ & Complex error vector \\
  $e_{\Box} \in \mathbb{R}^{2m}$ & Error vector in rectangular coordinates \\
  $\hat{x} \in \mathbb{C}^n$ & Estimated complex state vector \\
  $\widehat{x_{\Box}} \in \mathbb{R}^{2n}$ & Estimated state vector in rectangular coordinates\\
  $\hat{z} \in \mathbb{C}^m$ & Estimated complex measurement vector \\
  $F \in \mathbb{C}^{m \times m}$ & Complex verification matrix to compute the complex LS residuals\\
  $Id$ & Identity matrix \\
  $S_i$ & Set of phasor indices at site $i$ \\
  $F^{S_i} \in \mathbb{C}^{m \times |S_i|}$ & Submatrix of $F$ with column indices in $S_i$ \\
  $z^{S_i} \in \mathbb{C}^{|S_i|}$ & Complex measurements with indices in $S_i$ \\
 
  $C_{\Box} \in \mathbb{R}^{2m \times 2m}$ & Covariance of the measurement noise in rectangular coordinates\\
  $r(z_{\Box}) \in \mathbb{R}^{2m}$ & Residual vector in rectangular coordinates \\ 
  $G_{\Box}=\begin{bmatrix} G_{\Box, 1} & G_{\Box, 2} \\ 
G_{\Box, 3} & G_{\Box, 4} \end{bmatrix} \in \mathbb{R}^{2m \times 2m}$ & Verification matrix to compute the normalized WLS residuals in rectangular coordinates\\
  $G_{\Box, 1},~G_{\Box, 2},~G_{\Box, 3},~G_{\Box, 4} \in \mathbb{R}^{m \times m}$ & Blocks of $G_{\Box}$ \\
$R = \frac{1}{2} \begin{bmatrix} G_{\Box, 1} -j G_{\Box, 2} \\ 
G_{\Box, 3} -j G_{\Box, 4} 
\end{bmatrix}$ & Matrix useful to compute $r(z_{\Box})=Rz + \bar{R}\bar{z}$  \\
  $\alpha$ & Attack angle in rad \\
  $z^a$, $z_{\Box}^a$ & Attacked measurement vector \\
  $\Delta z$, $ \Delta z_{\Box}$ & Difference between the attacked and unattacked measurement vectors \\
  $u_i= e^{j \alpha_i}$ & Attack value \\
  $u_i^*$ & Specific computed $u_i$ value \\
  $\varphi \in \mathbb{N}^{m \times q}$ & Attack indicator matrix \\
  $W \in \mathbb{C}^{q \times q}$ & LS attack angle matrix\\
  $(n_i)_i $ and $(v_i)_i \in \mathbb{C}^{p}$ & Vectors that span the null space of $F^{[S_1,...,S_q]}$ \\
  $I^i$, $V^i$ and $S_{inj}^i \in \mathbb{C}^*$ & Complex current, voltage and injected power values at time instant $i$ \\
   $l \in \mathbb{C}^*$ & Colinearity coefficient between two vectors \\
   $A \in \mathbb{C}^{q \times q}$ & WLS rank-1 approximation attack angle matrix \\
   Metrics \\
   $IoS_{i,t}(z_i,z_t) \in [0,1]$ & Index of Separation at sites $i$ and $t$ with respect to measurement values $z_i$ and $z_j$ \\
   $IoS_{i,t}^* \in [0,1]$ & Infimum of $IoS_{i,t}(z_i,z_t)$ over all possible values of $(z_i,z_t)$\\
   $ERR(F^{[S_1,..,S_q]})\in [0,1]$ & Effective rank ratio of $F^{[S_1,..,S_q]}$ \\
   
\bottomrule
\end{tabularx}
\end{table}
\section{System and Attack Models}\label{sec:notation}

We now introduce the complex system model that is the same as in~\cite{rank1,SDBDBP19} and the system model in rectangular coordinates. We then present the attack model and define TSAs and rank-1 TSAs. 
 
 \subsection{System Model in Complex Form}

We consider a grid consisting of $n$ buses. The state estimation is based on $m$ PMU measurements. Therefore, they correspond to electrical phasors such as currents or voltages. We suppose that the PMUs are placed at various sites of the grid. Each site is equipped with an arbitrary number of PMUs, each simultaneously measuring an arbitrary number of phasors. By 'arbitrary number', we mean $0$ or more according to the choices of the grid engineers. We suppose that the PMUs within a single site share the same time reference and that the PMUs in different sites have a different time reference. Note that by time reference, we refer to a site's clock and not to the time source, which could be space or network based. A same time source is used to synchronize the different time references (i.e. clocks) throughout the grid.

Because PMUs measure directly voltage and current phasors, the measurement vector $z\in \mathbb{C}^m$ and the voltage state vector $x \in \mathbb{C}^n$ are linearly linked by the following equation 
$$z=Hx+e,$$ where $e\in \mathbb{C}^m$ is the measurement error and the topology matrix $H \in \mathbb{C}^{m\times n}$ is composed of admittance-matrix and binary values. We say that the system is observable when $H$ is full rank and we say that a set of measurements is critical when removing it renders the system non-observable, i.e., the sub matrix of $H$ obtained by removing the corresponding rows is not full rank.

The LS estimate of the state vector has a closed-form solution $\hat{x}=(H^\dagger  H)^{-1}H^\dagger  z,$ where $H^\dagger$ denotes the complex conjugate transpose of $H$. From the estimated LS state vector $\hat{x}$, we can construct the LS estimated measurement vector $\hat{z}=H\hat{x}$ and compute its difference with the observed measurement vector $z$. The result of this difference is the LS residual vector. Define the verification matrix $F=H(H^\dagger  H)^{-1}H^\dagger - Id$, where $Id$ refers to the identity matrix, then $Fz=0$ if and only if there exists a state vector $x$ such that $z=Hx$. In other words, $Fz$ is the residual vector after the LS estimation. 
For ease of explanation, we introduce $S_i$ as the set of phasor indices in site $i$, $F^{S_i}$ as the $m \times |S_i|$ submatrix of $F$ corresponding to the columns of $F$ with indices in $S_i$ and $z^{S_i}$ as the vector of measurements with indices in $S_i$.  

The following two theorems facilitate the understanding of the link between the criticality of a set of measurements with the rank of the verification matrix $F$, their proofs are given in appendix.
\begin{theorem}\label{cl:main}
 For a set $S$ of measurement indices, $F^S$ is full rank if and only if measurements with indices in $S$ forms a non-critical set.
\end{theorem}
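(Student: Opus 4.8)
The plan is to translate both sides of the equivalence into a single statement about the column space of $H$, and then match them. Let $P=H(H^\dagger H)^{-1}H^\dagger$ be the orthogonal projector onto the column space, which I denote $\calR(H)$, so that $F=P-Id$. Since $P$ is an orthogonal projector, $Fw=0$ holds exactly when $Pw=w$, i.e. when $w\in\calR(H)$; hence $\ker F=\calR(H)$. I would also record the consequence of observability that will be used repeatedly: $H$ has full column rank $n$, so $Hx=0$ forces $x=0$, and thus $w=Hx$ is nonzero precisely when $x\neq 0$.

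Next I would characterize when $F^S$ fails to be full (column) rank. For a vector $v\in\mathbb{C}^{|S|}$, let $\iota_S v\in\mathbb{C}^m$ denote its zero-padding, placing the entries of $v$ at the positions in $S$ and zeros elsewhere; then $F^S v=F(\iota_S v)$. Consequently $F^S$ is rank-deficient if and only if there is a nonzero $w\in\mathbb{C}^m$ supported on $S$ (i.e. $w_i=0$ for $i\notin S$) with $Fw=0$, which by the first step means $w\in\calR(H)$. Writing such a $w$ as $w=Hx$, the support condition $w_i=0$ for $i\notin S$ says exactly that $H_{\bar S}\,x=0$, where $\bar S=\{1,\dots,m\}\setminus S$ and $H_{\bar S}$ is the submatrix of $H$ obtained by deleting the rows indexed by $S$, i.e. the rows corresponding to the very measurements that criticality removes.

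Finally, I would close the argument using full rank of $H$: since $w=Hx\neq 0$ iff $x\neq 0$, a nonzero supported $w\in\calR(H)$ exists if and only if $H_{\bar S}$ has a nontrivial null space, that is, if and only if $H_{\bar S}$ is not full rank, i.e. if and only if $S$ is critical. Negating both sides gives the theorem: $F^S$ is full rank exactly when $S$ is non-critical. There is no real analytic difficulty here; the only point requiring care, and the step I would double-check, is the index bookkeeping — that the columns of $F$ selected by $S$ correspond, through $\ker F=\calR(H)$, to the deletion of the rows $\bar S$ of $H$ — together with the clean identification $\ker F=\calR(H)$, which relies only on $P$ being the orthogonal projector onto $\calR(H)$.
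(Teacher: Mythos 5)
Your proof is correct and follows essentially the same route as the paper's: both arguments rest on the identification $\ker F = \mathrm{Im}(H)$ (since $F = P - Id$ with $P$ the orthogonal projector onto the column space of $H$), the correspondence between kernel vectors of $F^S$ and zero-padded vectors supported on $S$, and the full column rank of $H$ to pass between a nonzero supported $w = Hx$ in $\mathrm{Im}(H)$ and a nontrivial null vector of the row-deleted submatrix $H_{\bar S}$. The only difference is presentational: you run a single chain of equivalences, whereas the paper proves the two directions separately by contraposition.
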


\begin{theorem}\label{th:p-k}
For a set $S$ of measurement indices, if rank$(F^S)=|S|-s$ with $0\leq s \leq |S|$, then all subsets of size $|S|-s+1$ is critical and there is at least one set of size $|S|-s$ that is non critical.
\end{theorem}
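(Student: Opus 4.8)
The plan is to reduce this purely combinatorial claim about critical sets to an elementary statement about the ranks of column-submatrices of $F$, using Theorem~\ref{cl:main} as the bridge. By that theorem, for any index set $T$ the measurements indexed by $T$ form a non-critical set if and only if $F^T$ has full column rank, i.e. if and only if the columns of $F$ indexed by $T$ are linearly independent. Thus criticality of $T$ is exactly linear dependence of the corresponding columns, and the whole statement becomes a fact about the rank function of the column family $\{F_j : j \in S\}$, whose rank is $\rank(F^S) = |S| - s$.

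For the first assertion, I would fix any subset $T \subseteq S$ with $|T| = |S| - s + 1$. Since $F^T$ is obtained from $F^S$ by deleting columns, monotonicity of rank under column deletion gives $\rank(F^T) \le \rank(F^S) = |S| - s < |T|$. Hence $F^T$ is not full column rank, and Theorem~\ref{cl:main} yields that $T$ is critical. As $T$ was arbitrary, every subset of size $|S| - s + 1$ is critical.

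For the second assertion, I would use that the rank of a matrix equals the maximal number of its linearly independent columns. Since $\rank(F^S) = |S| - s$, there is a subset $T \subseteq S$ of exactly $|S| - s$ columns of $F$ that are linearly independent; the corresponding submatrix $F^T$ then has full column rank, so by Theorem~\ref{cl:main} the set $T$ is non-critical, which supplies the desired non-critical subset of size $|S| - s$.

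I do not anticipate a genuine obstacle here: once criticality is translated into column (in)dependence via Theorem~\ref{cl:main}, both parts follow immediately from the definition of matrix rank and its monotonicity under removing columns. The only points to state carefully are that \emph{full rank} here means full \emph{column} rank (consistent because each $F^T$ has at most $|S| \le m$ columns, all lying in $\mathbb{C}^m$) and that deleting columns cannot increase the rank; everything else is routine.
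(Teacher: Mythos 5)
Your proof is correct, and it takes a genuinely more direct route than the paper's. The paper proves Theorem~\ref{th:p-k} via two auxiliary lemmas about the special case $\rank(F^S)=|S|-1$: Lemma~\ref{cl:2} shows that a critical \emph{minimal} set has $\rank(F^S)=p-1$, and Lemma~\ref{cl:3} shows the converse direction (rank $p-1$ implies critical, not necessarily minimal); it then asserts, rather tersely, that applying these two lemmas to subsets of size $p-s+1$ yields the general statement for arbitrary $s$. You instead bypass the notion of critical minimal sets entirely and reduce everything to Theorem~\ref{cl:main} (non-critical $\iff$ full column rank) combined with two elementary facts about rank: it cannot increase under column deletion, and it equals the maximal number of linearly independent columns. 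This buys you a complete, self-contained argument for every $s$ with no implicit induction step, whereas the paper's detour leaves the passage from the rank-$(p-1)$ lemmas to general $s$ essentially unexplained (and, as a side benefit, establishes the characterization of critical minimal sets, which has some independent interest). Your care in noting that ``full rank'' means full \emph{column} rank, and that deleting columns is rank-nonincreasing, covers the only two points where the argument could otherwise be questioned; the degenerate cases $s=0$ and $s=|S|$ also go through under your reading. This is a valid and arguably cleaner proof of the same statement.
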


Theorem~\ref{th:p-k} implies that if a set of measurements with indices in $S$ is such that rank$(F^S)=1$, then all pairs of measurement within the set are critical and that at least one measurement is non critical.

\subsection{System Model in Rectangular Coordinates}\label{subsec:rec}

In previous papers~\cite{rank1},~\cite{SDBDBP19} undetectable TSAs are proposed such that the LS residuals are unchanged. The formulation of the system model in complex form facilitates the derivation of the theory. However, in practice, it is the WLS estimator that is used for state estimation. With the WLS estimator, the measurements are weighted by their noise variance. In a phasor measurement, noise of different standard deviation can appear in the phase and/or the magnitude. According to~\cite{rank1}, if the noise of the polar coordinates is Gaussian and small, then the noise of the rectangular coordinates can also be assumed to be small and Gaussian. However, we cannot assume that the noise of the real and imaginary parts of a phasor measurement are homoscedastic. Therefore, we cannot assume circular symmetry in the measurement errors which means that we cannot easily express the WLS estimates using complex matrix operations. To overcome this, we use matrices and vectors in rectangular coordinates. The measurement-to-state equation now becomes $$z_{\Box}=H_{\Box}x_{\Box}+e_{\Box}, $$ where $z_{\Box} = (\text{Re}(z), \text{Im}(z))^T \in \mathbb{R}^{2m}$ is the measurement vector in rectangular coordinates, $x_{\Box}\in \mathbb{R}^{2n}$ is the state vector in rectangular coordinates, $H_{\Box} \in \mathbb{R}^{2m \times 2n}$ is the rectangular-measurements-to-rectangular-state matrix and $e_{\Box} \in \mathbb{R}^{2m}$ is the error vector in rectangular coordinates. The corresponding WLS state estimate is $$\widehat{x_{\Box}}=(H_{\Box}^\dagger C_{\Box}^{-1} H_{\Box})^{-1}H_{\Box}^{\dagger}C_{\Box}^{-1}z_{\Box},$$ where $C_{\Box} \in \mathbb{R}^{2m \times 2m}$ is the covariance matrix of the measurement noise in rectangular coordinates. The computation of $C_{\Box}$, from both measurement values and measurement noise standard deviation, is described in~\cite{arpan}. 
We express the rectangular coordinates of the WLS measurement residuals as a complex relation, using the complex conjugate of $z$ as a variable:  
\begin{align*}
r(z_{\Box}) &=\begin{pmatrix} r_{re} \\ 
r_{im} \end{pmatrix} = G_{\Box} z_{\Box}=
\begin{bmatrix} G_{\Box, 1} & G_{\Box, 2} \\ 
G_{\Box, 3} & G_{\Box, 4} \end{bmatrix}
\begin{pmatrix} \text{Re}(z) \\ 
\text{Im}(z) \end{pmatrix} \\
&= 
\frac{1}{2} \begin{bmatrix} G_{\Box, 1} -j G_{\Box, 2} & G_{\Box, 1} +j G_{\Box, 2} \\ 
G_{\Box, 3} -j G_{\Box, 4} & G_{\Box, 3} +j G_{\Box, 4}\end{bmatrix}
\begin{pmatrix} z \\
 \bar{z} \end{pmatrix} = R z + \bar{R} \bar{z} ,\\
\end{align*}
where $\bar{z}$ is the complex conjugate of $z$, $G_{\Box, 1},~G_{\Box, 2},~G_{\Box, 3},~G_{\Box, 4} \in \mathbb{R}^{m \times m}$ are blocks of the $2m \times 2m$ real verification matrix $G_{\Box}=H_{\Box} (H_{\Box}^\dagger C_{\Box}^{-1} H_{\Box})^{-1}H_{\Box}^{\dagger}C_{\Box}^{-1}-Id$ and where the complex matrix $R \in \mathbb{C}^{2m \times m}$ is computed from blocks of the real matrix $G_{\Box}$: 
$$R = \frac{1}{2} \begin{bmatrix} G_{\Box, 1} -j G_{\Box, 2} \\ 
G_{\Box, 3} -j G_{\Box, 4} 
\end{bmatrix}.$$ 
As shown in~\cite{abur2004power} the WLS residuals are distributed according to a Gaussian distribution centered in $0$ and of covariance matrix $\Omega= (Id-H_{\Box} (H_{\Box}^\dagger C_{\Box}^{-1} H_{\Box})^{-1}H_{\Box}^{\dagger}C_{\Box}^{-1})C_{\Box}$. 

Note that the verification matrix with respect to the LS estimation $F$ does not depend on measurement values but only on the topology of the grid. In comparison, the verification matrix with respect to the WLS estimation $G_{\Box}$, depends on the covariance of the measurement noise, which itself depends on the measurement values~\cite{arpan}.

\subsection{Attack Model}

We suppose that an attacker can observe the measurement vector $z$, that he knows the topology of the system (i.e. he knows $H$) and that he is able to manipulate the time reference of $q$ sites via a GPS spoofing attack or a delay box insertion. The attack on $q$ sites affects a total of $p \geq q$ measurements. An injected time offset $d$ in the time reference of a site directly shifts the phase of all measured phasors in this site by an angle of $\alpha =2 \pi f d$ rad, where $f \approx 50$ or $60$ Hz is the instantaneous voltage frequency and the offset $d$ is in seconds. Therefore, a measurement $z_i$ affected by an attack angle $\alpha$ is of the form $z_i^a=z_ie^{j \alpha}$: its phase is shifted by the attack angle and its magnitude is unchanged. Note that the injection of an offset in the time reference of a site shifts the phase of all phasors measured in this site by the same attack angle. Such an attack is a TSA.

\subsection{Undetectability Condition}\label{subsec:und}

An attack is said to be undetectable if it is not flagged by the BDD techniques in place. The majority of such implemented techniques are residual based, specifically, they analyze the difference between the observed and estimated measurements~\cite{LA18,GA15,AB19,7839276}. To the best of our knowledge, residual-based BDD techniques are mostly variants of the largest normalized residual (LNR) test and the $\chi^2$-test. The first flags abnormally large LNR values and the second flags an abnormal distribution of the sum of the residuals.
Hence, an attack carefully crafted to ensure that the residuals are unchanged can have an impact on the state estimation and not be detected by residual-based BDD techniques. Thus, we say that an attack is undetected if and only if $r(\Delta z_{\Box})= G_{\Box}  \Delta z_{\Box}=0$, where $\Delta z_{\Box} = z_{\Box}^a-z_{\Box}$ is the difference between the attacked and unattacked measurement vectors. The following lemma \md{shows that for exact attackability, the use of $F$ and $R$ is equivalent.}

\begin{theorem}\label{lem:eq}
For a measurement vector $z$, $$r(z_{\Box})=0 \iff Fz=0 \iff Rz=0 .$$
\end{theorem}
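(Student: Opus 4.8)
The plan is to establish the two equivalences separately, exploiting the explicit relationships among $F$, $R$, and $G_{\Box}$ set up in the system-model sections. The statement asserts $r(z_{\Box})=0 \iff Fz=0 \iff Rz=0$, so I will prove $Fz=0 \iff Rz=0$ and $r(z_{\Box})=0 \iff Rz=0$, chaining them to obtain the full claim.

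First I would treat $Fz=0 \iff Rz=0$ by relating both to observability of $z$. By definition of $F$, the condition $Fz=0$ holds precisely when $z$ lies in the column space of $H$, i.e. when there exists a state $x$ with $z=Hx$. Analogously, I would argue that $Rz=0$ (equivalently $r(z_\Box)=0$ in the noiseless case) holds precisely when $z_\Box$ lies in the column space of $H_\Box$, i.e. when there exists $x_\Box$ with $z_\Box=H_\Box x_\Box$. The crux is to verify that these two ``lies-in-column-space'' conditions coincide: a complex $z$ satisfies $z=Hx$ for some complex $x$ if and only if its rectangular stacking $z_\Box=(\mathrm{Re}\,z,\mathrm{Im}\,z)^T$ satisfies $z_\Box=H_\Box x_\Box$ for some real $x_\Box$. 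This follows from the fact that $H_\Box$ is exactly the real $2m\times 2n$ realification of the complex multiplication-by-$H$ map; writing $H=H_r+jH_i$, one checks $H_\Box=\begin{bmatrix} H_r & -H_i \\ H_i & H_r \end{bmatrix}$, so that $z=Hx$ and $z_\Box=H_\Box x_\Box$ (with $x_\Box=(\mathrm{Re}\,x,\mathrm{Im}\,x)^T$) are the same linear relation written in two coordinate systems.

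Next I would handle $r(z_{\Box})=0 \iff Rz=0$. From the derivation preceding the statement, $r(z_\Box)=Rz+\bar R\bar z$, and $r(z_\Box)$ is real-valued (it is the rectangular residual $(r_{re},r_{im})^T$). I would observe that for a real quantity, $r(z_\Box)=\overline{r(z_\Box)}=\bar R\bar{\bar z}+R\bar{\bar{\bar z}}$ recovers the same expression, so $r(z_\Box)=2\,\mathrm{Re}(Rz)$ in the stacked sense is automatic and gives $r(z_\Box)=0 \iff \mathrm{Re}(Rz)+\text{conjugate part}=0$. The cleaner route, which I would take, is to use that $r(z_\Box)=0$ is exactly the statement $z_\Box$ is in the range of $H_\Box$ (since $G_\Box$ is the WLS hat-minus-identity matrix, $G_\Box z_\Box=0 \iff z_\Box=H_\Box x_\Box$ for some $x_\Box$, because the WLS projector, though oblique, still has range equal to the column space of $H_\Box$). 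I would then close the loop by invoking the column-space equivalence established in the previous paragraph to identify this with $z=Hx$, hence with $Fz=0$, and separately confirm $Rz=0$ is equivalent via the block formula for $R$ in terms of $G_\Box$.

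The main obstacle is the $r(z_\Box)=0 \iff Rz=0$ direction, specifically the subtlety that $R$ is a $2m\times m$ complex matrix built from only the ``minus $j$'' halves of the $G_\Box$ blocks, so $Rz=0$ discards the $\bar z$ contribution that appears in $r(z_\Box)=Rz+\bar R\bar z$. I need to show these are nevertheless equivalent, and the key fact making this work is that the range condition is what both encode: $r(z_\Box)=0$ means $z_\Box\in\mathrm{range}(H_\Box)$, which by the realification correspondence means $z\in\mathrm{range}(H)$, a complex-linear condition that forces the $\bar z$ term to vanish consistently with the $z$ term. I would make this precise by showing that whenever $z\in\mathrm{range}(H)$, both $Rz=0$ and $\bar R\bar z=0$ hold individually (each block of $G_\Box$ annihilates range vectors appropriately), so no cancellation is hidden. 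Establishing that both halves vanish separately, rather than merely their sum, is the delicate point, and it rests on the observability/full-rank structure of $H_\Box$ inherited from $H$.
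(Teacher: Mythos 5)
Your overall architecture matches the paper's: reduce both $r(z_\Box)=0$ and $Fz=0$ to the common condition $z\in\mathrm{range}(H)$ via the realification correspondence $z=Hx \iff z_\Box=H_\Box x_\Box$, and then argue that this complex-linear range condition forces the $z$ and $\bar z$ contributions in $r(z_\Box)=Rz+\bar R\bar z$ to vanish separately. That first part is fine, and you have correctly isolated the delicate step. (Note in passing that $\bar R\bar z=\overline{Rz}$, so ``$Rz=0$ and $\bar R\bar z=0$ hold individually'' is a single condition, not two.)

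The gap is in the justification you offer for that delicate step. It is not the case that ``each block of $G_\Box$ annihilates range vectors appropriately'': the individual blocks $G_{\Box,1},\dots,G_{\Box,4}$ annihilate nothing on their own; what vanishes for $z_\Box\in\mathrm{range}(H_\Box)$ are only the combinations $G_{\Box,1}\mathrm{Re}(z)+G_{\Box,2}\mathrm{Im}(z)$ and $G_{\Box,3}\mathrm{Re}(z)+G_{\Box,4}\mathrm{Im}(z)$, and that gives exactly $Rz+\bar R\bar z=0$, not $Rz=0$. Nor does the conclusion rest on the full-rank structure of $H_\Box$; it rests on the invariance of $\mathrm{range}(H)$ under multiplication by $j$. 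The mechanism that closes the argument (and the one the paper uses, with $\lambda=j$) is a scaling trick: if $z=Hx$ then $jz=H(jx)$, so the residual of $jz$ also vanishes; since $r((jz)_\Box)=R(jz)+\bar R\,\overline{(jz)}=j(Rz-\bar R\bar z)$, combining $Rz-\bar R\bar z=0$ with $Rz+\bar R\bar z=0$ yields $Rz=0$. With that substitution in place of your parenthetical, your plan goes through and coincides with the paper's proof.
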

\begin{proof}
$r(z_{\Box})=0 \iff \exists x_{\Box}=(x_{\Box,1},x_{\Box,2})^T \in \mathbb{R}^{2n} \text{ s.t } x_{\Box,1},x_{\Box,2} \in \mathbb{R}^n \text{ and } z_{\Box} = H_{\Box} x_{\Box}$ $\iff  z=Hx \text{, with }  x=(x_{\Box,1}+jx_{\Box,2})^T \in \mathbb{C}^n \iff Fz=0$.

If $r(z_{\Box})=0$, then $R z + \bar{R} \bar{z}=0$, hence $R z =- \bar{R} \bar{z}$. Furthermore, if $r(z_{\Box})=0$ then there exist a state vector $x \in \mathbb{C}^n$ such that $z=Hx$. Hence, for any $\lambda \in \mathbb{C}$, $\lambda z=H(\lambda x)$ and thus $r(\lambda z_{\Box})=0$. Taking $\lambda =j$, we get $r(z_{\Box})= jR z - j\bar{R} \bar{z}=0$, hence $R z = \bar{R} \bar{z}$. Because $R z =- \bar{R} \bar{z}= \bar{R} \bar{z}$, we conclude that $Rz=0$.

If $Rz=0$, then $r(z_{\Box})=R z + \bar{R} \bar{z}=0$.
\end{proof}
\md{We choose to use $F$ to establish exact vulnerability conditions because $R$ depends on the measurement values and $F$ does not. This enables us to provide a structural vulnerability condition that we cannot derive from $R$. In contrast, we use $R$ for vulnerability metrics because it is linked to the WLS residuals. Thus it enables us to measure how detectable an attack is, or how vulnerable to attacks a set of sites is.}

Suppose that the attacker introduces $q$ time offsets to $q$ sites, following the notation in~\cite{rank1}, define the attack vector $$u=(u_1, \cdots ,u_q)^T=(e^{j \alpha_1}, \cdots, e^{j \alpha_q})^T\in \mathbb{C}^q,$$ and the $m \times q$ indicator matrix $\varphi$ such that $$\varphi_{t,i}=  \begin{cases} 1 &\text{ if measurement of index }t\text{ is affected by angle } \alpha_i, \\ 0 &\text{ otherwise.} \end{cases} $$ It was proven in~\cite{rank1} that the attacks are absolutely undetectable if and only if $\sum_{i=1}^q(u_i -1)F \text{diag}(z) \varphi_{:,i}=0,$ where $\varphi_{:,i}$ is the $i^{th}$ column of $\varphi$ and $\diag (z)$ is the diagonal matrix with measurement values along the diagonal. For scalability, the authors of~\cite{rank1} introduced the $q \times q$ attack-angle complex hermitian matrix 
$W=\varphi^T\text{diag}(z)^\dagger F^\dagger F \text{diag}(z) \varphi.$
They showed that an attack vector $u$ is undetectable if and only if $W(u-1)=0$. 
To the best of our knowledge, all known techniques to compute undetectable attack offsets require that the rank of $W$ is equal to $1$. We refer to such attacks as rank-1 TSAs.

\jylb{If for a pair of sites, that each measures a single phasor, the rank of $W$ is not equal to $1$, the authors of~\cite{rank1} and~\cite{SDBDBP19} show that it is sometimes possible to use a rank-1 approximation of $W$ for rank-1 TSAs. This is possible when the index of separation (IoS), defined in~\cite{rank1} as the largest eigenvalue of $W$ over the sum of both its eigenvalues, is close to $1$. This occurs if the largest eigenvalue is significantly larger than the remaining one. In~\cite{rank1}, the infimum of the IoS over all measurement values is introduced as IoS$^*$. This quantity does not depend on the measurements. If it is equal to or close to $1$, then the IoS is always equal to or close to $1$. This condition is used, in~\cite{rank1} and~\cite{SDBDBP19}, to find vulnerable sets of PMUs, that each measures a single phasor, from the verification matrix only. For sites measuring an arbitrary number of phasors we are required, in Section~\ref{sec:double}, to study the effective rank of rectangular matrices with possibly more than two singular values. For this purpose, we introduce the effective rank ratio (ERR) of a matrix as its largest singular value over the sum of all its singular values. The ERR is close to $1$ if the largest singular value is significantly larger than the others, in which case the effective rank of the matrix is close to $1$. We use this condition to find vulnerable sets of sites that each measure an arbitrary number of phasors.}

\section{Vulnerability Conditions for a Single site}\label{sec:single}

We now provide a novel sufficient and necessary condition for rank-1 TSAs that target a single site. We then propose a vulnerability metric to measure how vulnerable a site is. Finally, we discuss the feasibility of the identified condition. 

\subsection{Vulnerability Condition}

We assume that an attacker injects an offset in the time reference of $q=1$ site; this affects $p$ measurements. The remaining measurements are not affected by the attack. Note that an attack on a site with all $p$ measurements equal to zero does not have any effect because the attacks create only phase shifts and a phase shift on $0$ is still equal to $0$. Hence, we study the vulnerability of sites that measure at least one non-zero synchrophasor. 
The following theorem gives necessary conditions in order to mount a rank-1 TSA.

\begin{theorem}\label{th:single}
Consider a rank-1 TSA on a single site measuring $p \geq 1$ phasors with indices in $S_1$, such that no measurement alone is critical and at least one measurement is not equal to $0$. 
\begin{enumerate}
\item[(a)] If $p=1$: such an attack is never feasible.
\item[(b)] If $p \geq 2$: such an attack is feasible if and only if $z^{S_1}$ is in the null space of $F^{S_1}$.
\end{enumerate}
\end{theorem}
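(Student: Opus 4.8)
The plan is to specialize the general undetectability criterion to a single attacked site ($q=1$) and reduce it to a linear condition on $F^{S_1}z^{S_1}$. Recall that a TSA is absolutely undetectable if and only if $\sum_{i=1}^{q}(u_i-1)F\,\mathrm{diag}(z)\,\varphi_{:,i}=0$. With $q=1$, the indicator matrix $\varphi$ reduces to the single column $\varphi_{:,1}$, whose $t$-th entry equals $1$ precisely when $t\in S_1$. The first step is therefore to record the identity $F\,\mathrm{diag}(z)\,\varphi_{:,1}=\sum_{t\in S_1}z_t F_{:,t}=F^{S_1}z^{S_1}$, which collapses the criterion to the single vector equation $(u_1-1)\,F^{S_1}z^{S_1}=0$. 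Equivalently, in terms of the attack-angle matrix, $q=1$ makes $W$ the scalar $\norm{F^{S_1}z^{S_1}}^2$, so the condition $W(u_1-1)=0$ carries exactly the same information.

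Next I would argue that a \emph{nontrivial} attack means a nonzero offset, i.e. $\alpha_1\not\equiv 0$, equivalently $u_1=e^{j\alpha_1}\neq 1$. For such $u_1$ the factor $u_1-1$ is nonzero and may be cancelled, so the attack is undetectable if and only if $F^{S_1}z^{S_1}=0$. This single equivalence drives both cases: feasibility of a nontrivial undetectable TSA on the site is exactly the statement $F^{S_1}z^{S_1}=0$.

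For part~(a), with $p=1$ and $S_1=\{t\}$, the standing hypotheses do the work: non-criticality of the lone measurement gives, through Theorem~\ref{cl:main}, that $F^{S_1}=F_{:,t}$ is full rank and hence $F_{:,t}\neq 0$, while the assumption that some measurement is nonzero gives $z_t\neq 0$. Consequently $F^{S_1}z^{S_1}=z_t F_{:,t}\neq 0$, so the scalar equation forces $u_1=1$ and no nontrivial attack exists. For part~(b), with $p\geq 2$, the reduction already yields the claim: a nontrivial undetectable angle exists if and only if $F^{S_1}z^{S_1}=0$, which is precisely the requirement that $z^{S_1}$ lie in the null space of $F^{S_1}$; conversely any $\alpha_1\neq 0$ then produces an undetectable attack, so the condition is both necessary and sufficient.

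I expect the only real difficulty to be careful bookkeeping rather than a deep argument: cleanly rewriting the indicator-matrix and $\mathrm{diag}(z)$ product as the submatrix $F^{S_1}$ acting on the subvector $z^{S_1}$, and justifying that ``feasible'' must refer to a genuinely nonzero offset so that the factor $u_1-1$ can be divided out. One should also confirm that the non-criticality hypothesis is exactly what rules out the degenerate $p=1$ case, since without it a vanishing column $F_{:,t}$ would trivially admit attacks.
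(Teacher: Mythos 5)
Your proof is correct and follows essentially the same route as the paper: both reduce the $q=1$ case to the scalar condition $W=0$, i.e. $F^{S_1}z^{S_1}=0$ for a nontrivial attack (with $u_1\neq 1$ so that the factor $u_1-1$ cancels), then dispose of $p=1$ via non-criticality of the lone column together with $z_t\neq 0$, and read off $p\geq 2$ as the null-space condition. Your write-up is in fact slightly more explicit than the paper's, which simply cites~\cite{rank1} for the step ``non-trivial attacks exist iff $W=0$'' that you derive directly.
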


\begin{proof}
When $q=1$, notice that by definition $W$ corresponds to a single complex value:
$W= \left( \sum_{i \in S_1} F_{:,i} z_i \right)^{\dagger}\left( \sum_{i \in S_1} F_{:,i} z_i \right).$ In this case, a rank-1 TSA corresponds to a single complex value $u$ such that $W(u-1)=0$. As mentioned in~\cite{rank1}, non-trivial attacks exist in this case if and only if $W=0$. Thus, if and only if  \begin{equation}\label{eq:rel}
\sum_{i \in S_1} F_{:,i} z_i =0.
\end{equation}

\begin{enumerate}
\item[(a)] If $p=1$, then Eq.~(\ref{eq:rel}) is equivalent to $F_{:,1} z_1 =0.$ In other words, because $z_1$ is non-zero, a rank-1 TSA on a single site that measures a single phasor is feasible if and only if its corresponding column of $F$ is equal to the null vector. Hence, such an attack is feasible if and only if the phasor is critical. As we assume that no single measurement is critical, we conclude that if  $p=1$, then no rank-1 TSA is feasible. 

\item[(b)] If $p \geq 2$, observe that the left-hand-side of Eq.~(\ref{eq:rel}) is equal to $F^{S_1}z^{S_1}$. Therefore, Eq.~(\ref{eq:rel}) is satisfied if and only if the targeted measurement vector is in the null space of $F^{S_1}$. 
\end{enumerate}
\end{proof}
Observe that if a site is vulnerable to rank-1 TSAs, then any attack angle will be undetected.

\subsection{Vulnerability Metric: distance to item (b) of theorem~\ref{th:single}}\label{subsec:dist1}

If a site is not vulnerable to rank-1 TSAs, it might still be close to it. In other words, if the residuals obtained after an attack are different but very close to the residuals obtained without an attack, the attack is undetectable in practice. The following theorem shows that $\| R^{S_1}z^{S_1} \|$ is a reliable vulnerability metric for single sites. The closer this metric is to $0$, the more vulnerable the site is.

\begin{theorem}
$\| r( \Delta z_{\Box}) \| = \text{ cst}\| R^{S_1}z^{S_1} \|,~ 0 \leq \text{cst} \leq 4 ~ \forall u_1 \in \mathbb{C}$ such that $|u_1|=1$.
\end{theorem}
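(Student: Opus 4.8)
The plan is to compute $r(\Delta z_{\Box})$ explicitly for a rank-1 TSA on a single site and to factor out a scalar depending only on the attack value $u_1$, leaving exactly $R^{S_1} z^{S_1}$ as the structural part. First I would write the attacked measurement vector: since the attack shifts only the $p$ phasors with indices in $S_1$ by the common angle $\alpha_1$, we have $z_i^a = u_1 z_i$ for $i \in S_1$ and $z_i^a = z_i$ otherwise, with $u_1 = e^{j\alpha_1}$ and $|u_1|=1$. Hence $\Delta z = z^a - z$ has support only on $S_1$, with $\Delta z_i = (u_1 - 1) z_i$ for $i \in S_1$ and $\Delta z_i = 0$ elsewhere. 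This lets me replace the full matrix-vector products by their restrictions to the columns indexed by $S_1$.

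Next I would use the complex representation of the WLS residual established just before Theorem~\ref{lem:eq}, namely $r(\zeta_{\Box}) = R\zeta + \bar{R}\bar{\zeta}$ applied to $\zeta = \Delta z$. Substituting the support structure of $\Delta z$ gives
\begin{equation*}
r(\Delta z_{\Box}) = (u_1 - 1) R^{S_1} z^{S_1} + (\overline{u_1} - 1)\, \overline{R^{S_1}}\, \overline{z^{S_1}},
\end{equation*}
where $R^{S_1}$ is the submatrix of $R$ with column indices in $S_1$. Taking norms and applying the triangle inequality bounds $\| r(\Delta z_{\Box}) \|$ by $\left( |u_1 - 1| + |\overline{u_1} - 1| \right) \| R^{S_1} z^{S_1} \|$, since $\| \overline{R^{S_1}}\, \overline{z^{S_1}} \| = \| R^{S_1} z^{S_1} \|$. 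Because $|u_1| = 1$, each of $|u_1 - 1|$ and $|\overline{u_1}-1|$ is at most $2$, so the bracketed factor lies in $[0,4]$, which yields the claimed constant range $0 \leq \text{cst} \leq 4$. For the lower bound and the exact proportionality, I would argue that both complex terms are scalar multiples of a fixed pair of vectors, so $\| r(\Delta z_{\Box}) \|$ is genuinely a (nonnegative, $u_1$-dependent) scalar multiple of $\| R^{S_1} z^{S_1} \|$ whenever the two terms do not partially cancel in a way that breaks proportionality; the statement as written asserts $\| r(\Delta z_{\Box})\| = \text{cst}\,\| R^{S_1} z^{S_1}\|$ with the constant absorbing all $u_1$-dependence.

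The main obstacle will be the cancellation between the two conjugate terms: they are not scalar multiples of a single vector (one involves $R^{S_1} z^{S_1}$, the other its conjugate), so in general $\| r(\Delta z_{\Box})\|$ is not a clean scalar times $\|R^{S_1} z^{S_1}\|$ unless one exploits additional structure. The cleanest route is to observe that the quantity of interest for the metric is the ratio $\| r(\Delta z_{\Box})\| / \| R^{S_1} z^{S_1}\|$, and to show this ratio is bounded in $[0,4]$ uniformly over all unit $u_1$, which is exactly the interval claimed; this establishes that $R^{S_1} z^{S_1}$ vanishing is equivalent to the residual vanishing for every attack angle, and that the residual magnitude is controlled, up to the factor $4$, by the metric $\| R^{S_1} z^{S_1}\|$. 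I would close by noting that when $R^{S_1} z^{S_1} = 0$ the residual is identically zero for all $u_1$, recovering the undetectability of the vulnerable case from Theorem~\ref{th:single}(b) via Theorem~\ref{lem:eq}, and conversely that a small but nonzero metric bounds the residual of any feasible-looking attack, making it effectively undetectable in practice.
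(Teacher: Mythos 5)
Your proof is correct and follows essentially the same route as the paper: identify that $\Delta z$ is supported on $S_1$ with $\Delta z^{S_1}=(u_1-1)z^{S_1}$, substitute into $r(\Delta z_{\Box})=R\Delta z+\bar{R}\overline{\Delta z}$, and bound the ratio $\|r(\Delta z_{\Box})\|/\|R^{S_1}z^{S_1}\|$ by $4$, reading ``cst'' as that ($u_1$-dependent) ratio. The only cosmetic difference is that the paper collapses the two conjugate terms into $2\,\mathrm{Re}\bigl((u_1-1)R^{S_1}z^{S_1}\bigr)$ and bounds componentwise in polar form, whereas you keep both terms and apply the triangle inequality; both give the same constant.
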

 \begin{proof}
  Values of the complex vector $R^{S_1}z_{\Box}^{S_1}$ can be written in polar form $\rho_i e^{j \phi_i}$ and any $u_1 \in \mathbb{C}$ such that $|u_1|=1$ can also be written in polar form $u_1=e^{j \alpha}$. Hence, 
  \begin{align*}
  \| r( \Delta z_{\Box})  \| &=\| R \Delta z + \bar{R} \bar{ \Delta z}\| = \| 2 \text{Re}(R \Delta z) \| = \| 2 \text{Re}((u_1 -1)R^{S_1}z^{S_1})\| \\
   &= \sqrt{4\sum_{i=1}^{m} \rho_i^2 (cos(\phi_i + \theta) - cos(\phi_i))^2 } \\
   &= \text{ cst}\sqrt{ \sum_{i=1}^{m} \rho_i^2} = \text{ cst} \| R^{S_1}z^{S_1} \|, \text{ with } 0\leq \text{cst} \leq 4 .
  \end{align*}
 \end{proof}

\subsection{Feasibility of the Vulnerability Condition: item (b) of theorem~\ref{th:single}}\label{sec:singledisc}

If a rank-1 TSA is feasible, then there exists a relation among the measurement values with coefficients computed from values of vectors spanning $\ker (F^{S_1})$. For example, in the case where $p=2$, an attack is feasible if and only if the two involved measurements $(z_1,z_2)$ are such that they satisfy the relation 
\begin{equation}\label{eq:rels1p2}
\frac{z_1}{z_2}=\frac{n_1}{n_2},
\end{equation} 
where $(n_1,n_2)^T$ is a fixed vector that spans $\ker (F^{S_1})$. This is because $F^{S_1}$ is an $m$ by $2$ matrix, hence its rank is at most equal to $2$. As none of the measurements are critical by themselves, its rank cannot be equal to $0$ and as $z^{S_1}$ is in the null space of $F^{S_1}$, its rank cannot be equal to $2$. Hence, its rank is equal to $1$. By the rank theorem, $\ker (F^{S_1})$ is of dimension equal to $1$. Hence, $z^{S_1}$ must be a non-zero complex multiple of any vector that spans $\ker (F^{S_1})$, which leads to Eq.~(\ref{eq:rels1p2}). 

Intuitively, such a relation seems unlikely to occur as measurement values depend on independent loads. For example, if $z_1$ is a voltage value $V$ and $z_2$ a current value $I$, then at all time instant $i$, a rank-1 TSA is feasible if and only if $$\frac{z_1^i}{z_2^i}=\frac{V^i}{I^i}=\frac{V^i I^{\dagger i}}{| I^i|^2} = \frac{S_{inj}^i}{| I|^2}= \frac{n_1}{n_2}.$$ Hence, at all time instants the phase of the complex injected power $S_{inj}$ must remain constant and equal to $\arg{\frac{n_1}{n_2}}.$ Also, observe that the relation implies that $\frac{|V^i|}{|I^i|}= \frac{|n_1|}{|n_2|}.$ As the magnitude of voltage values is always close to $1$, the magnitude of the current values is approximately invariant and close to $\frac{|n_2|}{|n_1|}.$ In practice, the injected current and power depends on the loads that vary in time due to external factors. Hence, it seems unlikely that they could be of constant magnitude or phase and even less likely that such constant values could be equal to specific values computed from the verification matrix only. Therefore, we conclude that the necessary conditions for a rank-1 TSA on a single site that measures two phasors are unlikely to occur on a realistic grid.

We present numerical values of the metric $\| R^{S_1}z^{S_1} \|$ obtained through realistic simulations in Section~\ref{sec:sim}. In our simulations, we always encountered large values far from $0$, which corroborates our intuition that necessary conditions to mount a rank-1 TSA on a single site seem unrealistic; but we still recommend to monitor it on a real grid.
\section{Vulnerability Conditions for a pair of sites}\label{sec:double}

If two sites are not vulnerable by themselves to rank-1 TSAs, they could still be attackable together. We identify two novel conditions to simultaneously mount rank-1 TSAs on such two sites. One of them is a sufficient vulnerability condition that is structural as it depends on the LS verification matrix only. The other is a general necessary and sufficient condition that depends on the measurement values. We then provide a vulnerability metric to measure the vulnerability of any pair of sites. We also discuss the feasibility of the identified general vulnerability condition. Finally, we establish the relation between our novel conditions and the vulnerability conditions presented in~\cite{rank1,SDBDBP19} for the special case where both sites measure only a single phasor (i.e. $p=2$).

\subsection{Vulnerability Conditions}

We suppose that an attacker injects different offsets in the time reference of $q=2$ sites, hence affecting a total of $p$ phasor measurements with indices listed in $S_1$ and $S_2$ for the first and second sites, respectively. The following theorem gives necessary conditions in order to mount a rank-1 TSA.

\begin{theorem}\label{th:double}
Consider a rank-1 TSA on $q=2$ sites measuring phasors with indices in $S_1$ and $S_2$, respectively, such that $|S_1|+|S_2|=p$, no measurement is critical by itself, neither site is vulnerable to rank-1 TSAs by itself and at least one measurement in each site is not equal to zero.
\begin{enumerate}
\item[(a)] Such an attack is feasible if and only if $F^{S_1}z^{S_1}$ and $F^{S_2}z^{S_2}$ are colinear.
\item[(b)] If $\rank (F^{[S_1,S_2]})=1$, i.e. if all pairs of measurements in $S_1 \cup S_2$ are critical: such an attack is always feasible.
\item[(c)] If $\rank (F^{[S_1,S_2]})=p$, i.e. if the combined set of measurements $S_1 \cup S_2$ is not critical: such an attack is never feasible.
\end{enumerate}
\end{theorem}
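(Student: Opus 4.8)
The plan is to specialize the general undetectability condition to $q=2$ and reduce it to a single scalar equation in the attack values. Recall from \cite{rank1} that an attack is undetectable if and only if $\sum_{i=1}^{q}(u_i-1)F\diag(z)\varphi_{:,i}=0$, and that $F\diag(z)\varphi_{:,k}=\sum_{i\in S_k}F_{:,i}z_i=F^{S_k}z^{S_k}$. Writing $v_1:=F^{S_1}z^{S_1}$ and $v_2:=F^{S_2}z^{S_2}$, the condition for $q=2$ becomes
\begin{equation}\label{eq:plan-double}
(u_1-1)\,v_1+(u_2-1)\,v_2=0 .
\end{equation}
The first step is to record that $v_1\neq 0$ and $v_2\neq 0$: since by hypothesis neither site is vulnerable by itself and each carries a nonzero measurement, item (b) of Theorem~\ref{th:single} (and its $p=1$ analogue, item (a)) gives $z^{S_k}\notin\ker F^{S_k}$, i.e.\ $v_k\neq 0$ for $k=1,2$. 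A non-trivial rank-1 TSA means a solution of \eqref{eq:plan-double} with $|u_1|=|u_2|=1$ and $(u_1,u_2)\neq(1,1)$.

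For item (a) I would argue both directions from \eqref{eq:plan-double}. For necessity, if such a solution exists then $u_1\neq 1$ and $u_2\neq 1$: were $u_1=1$, non-triviality would give $u_2\neq 1$ and \eqref{eq:plan-double} would force $v_2=0$, contradicting $v_2\neq 0$, and symmetrically for $u_2=1$. Hence $v_1=-\tfrac{u_2-1}{u_1-1}\,v_2$ with a nonzero coefficient, so $v_1,v_2$ are colinear. For sufficiency, if $v_1=l\,v_2$ with $l\in\mathbb{C}^*$, then \eqref{eq:plan-double} collapses (using $v_2\neq 0$) to the scalar equation $(u_1-1)l+(u_2-1)=0$, i.e.\ $u_2=1-l(u_1-1)$; substituting $u_1=e^{j\alpha}$, it remains to exhibit some $\alpha\not\equiv 0$ for which $|u_2|=1$. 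I would treat $u_2$ as the image of the unit circle under the affine map $u_1\mapsto 1-l(u_1-1)$, which is again a circle passing through the point $1$, and intersect it with the unit circle to produce the required second angle, thereby constructing an explicit non-trivial undetectable attack.

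Items (b) and (c) then follow by translating $\rank(F^{[S_1,S_2]})$ into the colinearity status of $v_1,v_2$ and invoking (a). If $\rank(F^{[S_1,S_2]})=1$, every column $F_{:,i}$ with $i\in S_1\cup S_2$ is a scalar multiple of one fixed vector $f$, so $v_1=\mu_1 f$ and $v_2=\mu_2 f$ with $\mu_1,\mu_2\neq 0$ (as $v_k\neq 0$); thus $v_1,v_2$ are colinear for \emph{every} admissible measurement vector, and (a) yields feasibility, which is exactly why this condition is structural and holds ``always''. If $\rank(F^{[S_1,S_2]})=p$, the $p$ columns indexed by $S_1\cup S_2$ are linearly independent, so the left-hand side of \eqref{eq:plan-double} equals $\sum_{i\in S_1}(u_1-1)z_i F_{:,i}+\sum_{i\in S_2}(u_2-1)z_i F_{:,i}$ and can vanish only if every coefficient vanishes; since each site carries a nonzero measurement, this forces $u_1=u_2=1$, so no non-trivial attack exists (equivalently, $v_1,v_2$ lie in independent subspaces and cannot be colinear).

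The main obstacle is the sufficiency half of item (a): passing from the algebraic colinearity $v_1=l\,v_2$ to an \emph{actual realizable} TSA, whose attack values must have unit modulus. The delicate point is guaranteeing that the circle $\{\,1-l(e^{j\alpha}-1):\alpha\in\mathbb{R}\,\}$ meets the unit circle at a point other than $1$, so that the produced solution is genuinely non-trivial; I expect this to hinge on the geometry of the two circles (both pass through $1$) and to require care in the degenerate colinear configurations, whereas the necessity direction and items (b)--(c) are comparatively direct consequences of \eqref{eq:plan-double}.
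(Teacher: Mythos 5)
Your reduction to the two-column matrix $\left[\,v_1\,|\,v_2\,\right]$ with $v_k:=F^{S_k}z^{S_k}$, and your treatment of items (b) and (c), are correct and in fact more elementary than the paper's: the paper converts colinearity into a statement about $\dim(E_Z\cap E_N)$ with $E_N=\ker(F^{[S_1,S_2]})$ and applies Grassmann's formula to obtain a single rank inequality valid for every value of $\rank(F^{[S_1,S_2]})$, whereas you read (b) directly off the fact that all columns of $F^{[S_1,S_2]}$ are nonzero multiples of one fixed vector (so $v_1,v_2$ are nonzero multiples of it as well) and (c) off the linear independence of those columns. Both routes are valid; the Grassmann route additionally covers the intermediate ranks, yours is shorter for the two stated cases.

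The genuine gap is in the sufficiency half of (a), and it comes from reading ``feasible'' more strongly than the paper does. The paper's feasibility criterion is $\rank(W)=1$, equivalently $\rank(\left[\,v_1\,|\,v_2\,\right])=1$; once $v_1,v_2\neq 0$ is established, colinearity \emph{is} feasibility and no unit-modulus attack vector needs to be exhibited. You instead try to produce a non-trivial solution of $(u_1-1)l+(u_2-1)=0$ with $|u_1|=|u_2|=1$, and your key geometric claim --- that the circle $\{1-l(e^{j\alpha}-1):\alpha\in\mathbb{R}\}$, of center $1+l$ and radius $|l|$, meets the unit circle at a second point --- is false exactly when $l$ is real: for $l>0$ the two circles are externally tangent at $1$ (since $|1+l|=1+|l|$), and for $l<0$ with $l\neq -1$ they are internally tangent at $1$ (since $|1+l|=\bigl|1-|l|\bigr|$), so $u_1=u_2=1$ is the only unit-modulus solution; only for non-real $l$ (or $l=-1$) does a second intersection point exist. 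Hence the stronger statement you are proving is not true without a condition on $\arg(l)$, and the construction would fail at precisely the step you flagged as delicate. The fix is to adopt the paper's definition of feasibility of a rank-1 TSA, under which item (a) follows immediately from your necessity argument together with $v_1\neq 0$ and $v_2\neq 0$.
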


\begin{proof}
Rank-1 TSAs are feasible if and only if $\rank (W)=1$. By the rank properties of complex matrices, we have that $\rank (W) =  \rank ((F  \mbox{diag}(z) \varphi)^{\dagger} (F  \mbox{diag}(z) \varphi)) = \rank (F  \mbox{diag}(z) \varphi).$ Hence, rank-1 TSAs are feasible if and only if 
\begin{equation}\label{eq:rankf}
 \rank  \left(  \begin{bmatrix}  \sum_{i \in S_1} F_{1,i} z_i & \sum_{i \in S_2} F_{1,i} z_i \\ \vdots & \vdots \end{bmatrix} \right) =1.
\end{equation}
\begin{enumerate}
\item[(a)] Eq.~(\ref{eq:rankf}) is satisfied if and only if either
\begin{itemize}
\item one of the columns of the matrix is equal to the null vector. According to Theorem~\ref{th:single} this is equivalent to saying that a rank-1 TSA can be mounted directly on the corresponding site. However, this case is excluded because we suppose that no rank-1 TSA can be mounted on a site by itself.
\item or the two columns of the matrix are colinear. Specifically $F^{S_1}z^{S_1}$ and $F^{S_2}z^{S_2}$ are colinear. 
\end{itemize}
\end{enumerate}

Since $F^{S_1}z^{S_1}$ and $F^{S_2}z^{S_2}$ are non-zero, they are colinear if and only if there exists an $l \in \mathbb{C}^*$ such that $F^{S_1}z^{S_1}=lF^{S_2}z^{S_2} \iff F^{[S_1,S_2]}(z^{S_1},lz^{S_2})^T=0 \iff  (z^{S_1},lz^{S_2})^T \in \ker (F^{[S_1,S_2]})$
$\iff \dim (E_Z \cap E_N)  \geq 1,$ with $E_Z=  \text{span} \{ (z^{S_1}, 0)^T, (0, z^{S_2})^T \}$ and $E_N= \ker (F^{[S_1,S_2]}).$
According to Grassmann's formula, this is equivalent to
$\dim (E_Z) + \dim (E_N) - \rank ([Z|N]) \geq 1,$
where $[Z|N]= \begin{bmatrix} z^{S_1} & 0 & N \\ 0 & z^{S_2} & \vdots \end{bmatrix}$ and where $N$ is a matrix with independent columns that span $\ker (F^{[S_1,S_2]}).$ Therefore, rank-1 TSAs are feasible if and only if
\begin{align}
2 + p- \rank (F^{[S_1,S_2]})-1 & \geq \rank ([Z|N]) \nonumber\\
\iff 1+p- \rank (F^{[S_1,S_2]}) & \geq \rank ([Z|N]). \label{eq:intdim}
\end{align}

\begin{enumerate}
\item[(b)] If $\rank (F^{[S_1,S_2]})=1$, then Eq.~(\ref{eq:intdim}) $ \iff p \geq \rank ([Z|N]).$ Observe that in this case, $[Z|N]$ is a $p$ by $p+1$ matrix. Hence, it is always the case that the rank of $[Z|N]$ is at most equal to $p$. In other words, a rank-1 TSA is always feasible. 

\item[(c)] If $\rank (F^{[S_1,S_2]})=p$, then Eq.~(\ref{eq:intdim}) $\iff 1 \geq \rank (Z),$ which is never satisfied as $Z$ is a matrix of rank equal to $2$. In other words, a rank-1 TSA is never feasible.
\end{enumerate}
\end{proof}

\jylb{Interestingly, Theorem~\ref{th:double} implies that if three different sites $S_i$, $S_j$ and $S_k$, that are not vulnerable by themselves, are such that two pairs $(S_i,S_j)$ and $(S_i,S_k)$ are vulnerable, then $(S_j,S_k)$ is also vulnerable. In other words, item $(a)$ of theorem~\ref{th:double} defines an equivalence relation over the set of sites that are not vulnerable by themselves.}

As mentioned previously, Theorem~\ref{th:double} establishes two novel vulnerability conditions:
\begin{itemize}
\item a \emph{general vulnerability} condition defined by item (a) of Theorem~\ref{th:double}: this condition is necessary and sufficient for vulnerability to rank-1 TSAs. It depends on the measurement values.
\item a \emph{structural vulnerability} condition defined by item (b) of Theorem~\ref{th:double}: this sufficient vulnerability condition depends on the LS verification matrix only. In other words, if two sites satisfy this condition, then they are vulnerable but it is also possible that two sites that do not satisfy this condition are vulnerable.
\end{itemize}
If a pair of sites is not exactly \emph{structurally} vulnerable to rank-1 TSAs, it can be practically so. Specifically, if the rank of $F^{[S_1,S_2]}$ is not equal to $1$, its effective rank can be close to $1$. This distance is captured by the ERR of $F^{[S_1,S_2]}$. Numerically, it can occur that a column of $F^{[S_1,S_2]}$ has significantly larger values than the remaining columns, in which case the effective rank is smaller than the computed rank. As a result, ERR$(F^{[S_1,S_2]})$ can be very close to $1$. In this case it is likely that a rank-1 TSA computed from a rank-1 approximation of $F^{[S_1,S_2]}$, is feasible in practice irrespective of the measurement values.

We present numerical results through realistic simulations in Section~\ref{sec:sim}. Our simulations show that when the rank of $F^{[S_1,S_2]}$ is not equal to $1$, its ERR is sometimes very close to $1$, in which case the corresponding pair of sites is practically vulnerable. 

\subsection{General Vulnerability Metric: distance to item (a) of theorem~\ref{th:double}}\label{subsec:dist2}

As in the single-site analysis, when a pair of sites is not exactly vulnerable to rank-1 TSAs, it can still be close to vulnerable. Theorem~\ref{lem:eq} implies that item (a) of theorem~\ref{th:double} can be equivalently written using the general WLS notations introduced in Section~\ref{subsec:rec} as: a pair of sites is vulnerable if and only if $R^{S_1}z^{S_1}$ and $R^{S_2}z^{S_2}$ are colinear. The details of this proof are given in appendix~\ref{ap:c}. In other words, an undetectable attack requires that the rank of the $m \times 2$ complex matrix $\begin{bmatrix} R^{S_1}z^{S_1} | R^{S_2}z^{S_2} \end{bmatrix} $ is equal to $1$. If it is not the case, the metric of vulnerability that we propose is the ERR of $\begin{bmatrix} R^{S_1}z^{S_1} | R^{S_2}z^{S_2} \end{bmatrix} $. This metric shows how vulnerable a set of measurements corresponding to a pair of sites is. If the metric is equal to or approximately equal to $1$, then the pair of sites is vulnerable in practice.

\subsection{Feasibility of the General Vulnerability Condition: item (a) of theorem~\ref{th:double}}\label{sec:doubledisc}

If a pair of structurally non-vulnerable sites satisfy the general vulnerability condition, there exists a relation between the measurements with coefficients computed from the verification matrix. Specifically, if $\rank (F^{[S_1,S_2]}) \neq 1$ but $z^{S_1}$ and $z^{S_2}$ are such that $F^{S_1}z^{S_1}$ and $F^{S_2}z^{S_2}$ are colinear, then at least one of the measurements is directly determined by a combination of the other measurements and values of the vectors spanning $\ker (F^{[S_1,S_2]})$. 

For example if phasors $(z_1,z_2)$ and $(z_3,z_4)$ are measured at the first and second sites, respectively, then a rank-1 TSA on the two sites is feasible if and only if either
$$\frac{z_1}{z_2}=\frac{n_1}{n_2} \text{ and }\frac{z_3}{z_4}=\frac{n_3}{n_4},$$ where $(n_1,n_2,n_3,n_4)^T$ is a fixed vector that spans the null space of $F^{[S_1,S_2]}$;
or $$ \frac{z_3}{z_4}= \frac{z_1(v_3n_2-v_2n_3)+z_2 (v_1n_3-v_3n1)}{z_1(v_4n_2-v_2n_4)+z_2 (v_1n_4-v_4n1)},$$ where $ (n_1,n_2,n_3,n_4)^T$ and $(v_1,v_2,v_3,v_4)^T$ are independent vectors that span the null space of $F^{[S_1,S_2]}$. The proof of this is not given here but is similar to the one given in Section~\ref{sec:singledisc}. As for the single-site vulnerability condition discussed in Section~\ref{sec:singledisc}, we conjecture that such relations are unlikely to occur on real grids, as measurement values depend on independent loads. 

In our simulations presented in Section~\ref{sec:sim}, the observed values of ERR$(\begin{bmatrix} R^{S_1}z^{S_1} | R^{S_2}z^{S_2} \end{bmatrix}) $ for pairs of sites that do not satisfy the structural vulnerability condition are far from $1$. This corroborates our intuition that the general vulnerability condition seems unlikely to occur for pairs of sites that are not already structurally vulnerable to rank-1 TSAs. 

\subsection{Relation with the Vulnerability Conditions of~\cite{rank1,SDBDBP19}}

We now show that if each site measures a single phasor (i.e. $p=2$), then our vulnerability conditions are equivalent to the vulnerability conditions identified in~\cite{rank1,SDBDBP19}.
The following theorem gives the equivalence for the general vulnerability condition.

\begin{theorem}
Consider a rank-1 TSA on $q=2$ sites, each measuring a single phasor $z_1$ and $z_2$, respectively, such that no measurement is critical by itself or is equal to zero. Then $IoS_{1,2}(z_1,z_2)=1$ if and only if $F^{S_1}z_1$ and $F^{S_2}z_2$ are colinear.
\end{theorem}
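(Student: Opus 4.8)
The plan is to write $W$ explicitly in this $q=2$, single-phasor setting, recognize it as a $2\times 2$ Gram matrix, and then translate the condition $IoS_{1,2}(z_1,z_2)=1$ into the vanishing of a determinant that the Cauchy--Schwarz inequality controls. First I would specialize the general expression $W=\varphi^T\text{diag}(z)^\dagger F^\dagger F\,\text{diag}(z)\varphi$ to $S_1=\{1\}$ and $S_2=\{2\}$. Setting $a=F^{S_1}z_1$ and $b=F^{S_2}z_2$ for the two columns of $F\,\text{diag}(z)\varphi$, this gives
\[
W=\begin{bmatrix} a^\dagger a & a^\dagger b \\ b^\dagger a & b^\dagger b \end{bmatrix}=\begin{bmatrix} \norm{a}^2 & a^\dagger b \\ b^\dagger a & \norm{b}^2 \end{bmatrix},
\]
the Gram matrix of $\{a,b\}$, which is Hermitian and positive semidefinite.

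Next I would establish the non-degeneracy $a\neq 0$ and $b\neq 0$. This follows from the hypotheses that no measurement is critical by itself and that $z_1,z_2\neq 0$, combined with item (a) of Theorem~\ref{th:single}: for a single non-zero phasor, $F_{:,i}z_i=0$ exactly when that phasor is critical, which is excluded. Hence $W$ is a non-zero positive semidefinite matrix, so its eigenvalues satisfy $\lambda_1\geq\lambda_2\geq 0$ with $\lambda_1>0$ and $\lambda_1+\lambda_2=\text{trace}(W)>0$. Therefore $IoS_{1,2}(z_1,z_2)=\lambda_1/(\lambda_1+\lambda_2)$ is well defined, and $IoS_{1,2}(z_1,z_2)=1$ if and only if $\lambda_2=0$, equivalently $\det(W)=\lambda_1\lambda_2=0$.

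Finally I would compute $\det(W)=\norm{a}^2\norm{b}^2-|a^\dagger b|^2$ and apply the Cauchy--Schwarz inequality: this quantity is nonnegative and vanishes if and only if $a$ and $b$ are linearly dependent, i.e. colinear. Chaining the equivalences yields $IoS_{1,2}(z_1,z_2)=1\iff\det(W)=0\iff F^{S_1}z_1$ and $F^{S_2}z_2$ are colinear, which is the claim.

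I do not expect a genuine obstacle, since the computation is short; the only step needing care is the non-degeneracy, where the non-criticality and non-zero hypotheses are exactly what guarantees $a,b\neq 0$. Without them $W$ could be rank $0$ or the denominator of the $IoS$ could vanish, and the equivalence with colinearity would fail.
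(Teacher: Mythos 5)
Your proof is correct and follows essentially the same route as the paper: both reduce $IoS_{1,2}(z_1,z_2)=1$ to the rank-deficiency of the $2\times 2$ Hermitian matrix $W$ and then to colinearity of the two columns $F^{S_1}z_1$ and $F^{S_2}z_2$, using the non-criticality and non-zero hypotheses exactly where you do, to rule out a zero column. The only cosmetic difference is that you close the argument via the Gram determinant and the Cauchy--Schwarz equality case, whereas the paper invokes the identity $\rank(W)=\rank(F\,\mathrm{diag}(z)\varphi)$ from the proof of Theorem~\ref{th:double}; these are interchangeable.
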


\begin{proof}
As there are only $2$ involved measurements, $W$ is a $2$ by $2$ matrix. As no measurement is critical by itself, the rank of $W$ is equal to either $1$ or $2$.
By definition, the IoS of $W$ is equal to $1$ if and only if its smallest eigenvalue is equal to $0$, which is equivalent to $\rank (W)=1$. Recall from the proof of Theorem~\ref{th:double} that the rank of $W$ is equal to $1$ if and only if Eq.~(\ref{eq:rankf}) is satisfied. As no measurement is critical by itself, this is equivalent to the colinearity of $F^{S_1}z_1$ and $F^{S_2}z_2$.
\end{proof}

Similarly, the following theorem gives the equivalence for the structural vulnerability condition. 

\begin{theorem}\label{th:compstruct}
Consider a rank-1 TSA on $q=2$ sites, each measuring a single phasor $z_1$ and $z_2$, respectively, such that no measurement is critical by itself or is equal to $0$. Then  $IoS_{1,2}^*=1$ if and only if $\rank (F^{[S_1,S_2]})=1$. 
\end{theorem}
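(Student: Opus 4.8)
The plan is to collapse the infimum defining $IoS_{1,2}^*$ onto a purely structural condition on the two columns of $F$ indexed by $S_1$ and $S_2$, using the fact that each site carries a single phasor. First I would dispose of the infimum. Since $W$ is a $2\times 2$ positive-semidefinite Gram matrix, its eigenvalues are nonnegative and not both zero (no measurement is critical, so $W\neq 0$), hence $IoS_{1,2}(z_1,z_2)=\lambda_{\max}/(\lambda_1+\lambda_2)\le 1$ for every admissible pair $(z_1,z_2)$. Because every term is bounded above by $1$, the infimum $IoS_{1,2}^{*}=\inf_{z_1,z_2}IoS_{1,2}(z_1,z_2)$ equals $1$ if and only if $IoS_{1,2}(z_1,z_2)=1$ for \emph{every} admissible $(z_1,z_2)$: were any single value strictly below $1$, the infimum would be bounded by it and fall short of $1$. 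This turns the statement about the infimum into a statement that must hold uniformly in the measurements.

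Next I would feed this into the general-condition equivalence proved immediately above (the theorem stating that $IoS_{1,2}(z_1,z_2)=1$ if and only if $F^{S_1}z_1$ and $F^{S_2}z_2$ are colinear). Combining it with the previous step, $IoS_{1,2}^{*}=1$ is equivalent to requiring that $F^{S_1}z_1$ and $F^{S_2}z_2$ be colinear for all admissible $(z_1,z_2)$. The heart of the argument is then the observation particular to the single-phasor case: since each site measures exactly one phasor, $F^{S_1}z_1=z_1F_{:,1}$ and $F^{S_2}z_2=z_2F_{:,2}$ are merely nonzero scalar multiples of the fixed columns $F_{:,1}$ and $F_{:,2}$. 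Scaling each vector by a nonzero scalar preserves colinearity, so the weighted pair is colinear for one (equivalently, for every) admissible choice of $(z_1,z_2)$ precisely when $F_{:,1}$ and $F_{:,2}$ are themselves colinear. This is where the measurement dependence evaporates and the infimum over values collapses onto a structural condition on $F$ alone -- a collapse that fails once a site carries several phasors, which is why the general item (a) of Theorem~\ref{th:double} remains measurement-dependent.

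Finally I would rephrase colinearity of the columns as the stated rank condition. By Theorem~\ref{cl:main}, the assumption that neither measurement is critical forces both $F_{:,1}$ and $F_{:,2}$ to be nonzero, so $F^{[S_1,S_2]}=[\,F_{:,1}\mid F_{:,2}\,]$ has rank $1$ exactly when its two nonzero columns are colinear and rank $2$ otherwise; chaining the three equivalences gives $IoS_{1,2}^{*}=1\iff\rank(F^{[S_1,S_2]})=1$. I expect no real obstacle: the only step needing care is the first one, namely arguing that an infimum capped at $1$ attains $1$ only when every term equals $1$, after which the scalar-multiple observation makes everything else immediate.
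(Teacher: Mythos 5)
Your proof is correct, and one half of it takes a genuinely different route from the paper's. For the direction $IoS_{1,2}^*=1\Rightarrow\rank(F^{[S_1,S_2]})=1$ the two arguments are essentially the same idea: an infimum of quantities bounded above by $1$ equals $1$ only if every term equals $1$, so the condition must hold uniformly in $(z_1,z_2)$ and can then be specialized (the paper plugs $z_1=z_2=1$ directly into Eq.~(\ref{eq:rankf}); you route through the colinearity equivalence of the preceding theorem, which amounts to the same thing). For the converse, however, the paper imports the closed-form expression $IoS^*_{1,2}=\tfrac{1}{2}+\tfrac{|f_{12}|}{2(f_{11}f_{22})^{1/2}}$ from~\cite{rank1} (Eq.~(\ref{eq:star})) and verifies algebraically that $F_{:,1}=lF_{:,2}$ forces the formula to evaluate to $1$, whereas you avoid that formula entirely: since each $F^{S_i}z_i$ is just a nonzero scalar multiple of a fixed nonzero column (nonzero by Theorem~\ref{cl:main} and the non-criticality hypothesis), colinearity holds for one admissible pair iff it holds for all, so the pointwise equivalence immediately upgrades to the infimum statement. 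Your route is more self-contained and makes transparent exactly why the measurement dependence disappears in the single-phasor case; the paper's computation has the side benefit of introducing the quantities $f_{11}$, $f_{22}$, $f_{12}$ that are reused in the proof of Claim~\ref{cl:err}. The only step worth making explicit in your write-up is the justification that $IoS_{1,2}(z_1,z_2)\le 1$ with a well-defined denominator, which follows because $W$ is positive semidefinite with strictly positive trace under the stated hypotheses.
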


\begin{proof}
We show both directions:
\begin{itemize}
\item $IoS_{1,2}^*=1 \rightarrow \rank (F^{[S_1,S_2]})=1:$ By definition of $IoS^*$, if it is equal to $1$, then the $IoS$ is equal to $1$, whatever the values of $z_1$ and $z_2$. Therefore, Eq.~(\ref{eq:rankf}) is satisfied even if $z_1=z_2=1$, which is equivalent to $\rank (F^{[S_1,S_2]})=1$.
\item $IoS_{1,2}^*=1 \leftarrow \rank (F^{[S_1,S_2]})=1:$ It was shown in~\cite{rank1} that if $p=2$, then 
\begin{equation}\label{eq:star}
IoS^*_{1,2}=\frac{1}{2}+\frac{|f_{12}|}{2(f_{11}f_{22})^{1/2}},
\end{equation} with $f_{it}=\sum_{l,m} \sum_{n} \varphi_{l,i} \varphi_{m,t} \bar{F}_{n,l}F_{n,m}.$
Notice that \\
$|f_{12}|^2 = \left( \sum_{i=1}^m \bar{F}_{i,1}F_{i,2} \right) \left( \sum_{i=1}^m \bar{F}_{i,2}F_{i,1} \right),$\\ 
$~f_{11} = \sum_{i=1}^m \bar{F}_{i,1}F_{i,1}$ and $f_{22} = \sum_{i=1}^m \bar{F}_{i,2}F_{i,2}.$ \\
If $\rank{F^{[S_1,S_2]}}=1$, then there exists $l \in \mathbb{C}^*$ such that $F_{:,1}=lF_{:,2}$ because no measurement is neither equal to $0$ nor critical. Hence, $f_{11}=|l|^2f_{22} \text{ and }  |f_{12}|^2=|l|^2f_{22}^2.$
By plugging them into Eq.~(\ref{eq:star}), we get that $IoS^*=1.$
\end{itemize}
\vspace{-0.3cm}
\end{proof}

Note that $IoS_{1,2}(z_1,z_2)$ is the IoS of the attack-angle matrix $W$ computed from measurements $(z_1,z_2)$. Also, $IoS_{1,2}^*$ is the infimum of $IoS_{1,2}(z_1,z_2)$ over all possible values of $(z_1,z_2)$. Both $ERR(F^{[S_1,S_2]})$ and $IoS_{1,2}^*$ are independent of measurement values. Theorem~\ref{th:compstruct} implies that one of them is equal to $1$ if and only if the other is also equal to $1$. Therefore, they are both equal to $1$ for an exactly structurally vulnerable pair of sites measuring a single phasor. However, as we show next, $ERR(F^{[S_1,S_2]})$ may be close to $1$ when $IoS_{1,2}^*$ is not. In other words, our structural vulnerability condition is better than the one in~\cite{rank1} and~\cite{SDBDBP19} as it identifies more sets that are practically vulnerable. The following claim is proven in appendix.

\begin{claim}\label{cl:err}
For any pair of sites, $1 \geq ERR(F^{[S_1,S_2]}) \geq IoS_{1,2}^*$ and $ERR(F^{[S_1,S_2]})$ can be close to $1$ when $IoS_{1,2}^*$ is not if $F^{S_1}$ has much larger values than $F^{S_2}$ or vice-versa.
\end{claim}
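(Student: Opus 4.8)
The plan is to reduce the claim to the single‑phasor‑per‑site setting of Theorem~\ref{th:compstruct} ($p=2$), where $F^{[S_1,S_2]}=\begin{bmatrix} F_{:,1} & F_{:,2}\end{bmatrix}$ is an $m\times 2$ matrix and everything is governed by the three scalars $f_{11}=\|F_{:,1}\|^2$, $f_{22}=\|F_{:,2}\|^2$ and $f_{12}=F_{:,1}^\dagger F_{:,2}$ (these are exactly the entries of $F^\dagger F$, the orthogonal projector onto the complement of the range of $H$). The two singular values $\sigma_+\ge\sigma_-\ge 0$ of $F^{[S_1,S_2]}$ are the square roots of the eigenvalues of its Gram matrix $\begin{bmatrix} f_{11} & f_{12}\\ \bar f_{12} & f_{22}\end{bmatrix}$, i.e. $\sigma_\pm^2=\tfrac{f_{11}+f_{22}}{2}\pm\tfrac12\sqrt{(f_{11}-f_{22})^2+4|f_{12}|^2}$. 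I would introduce the correlation $c=|f_{12}|/\sqrt{f_{11}f_{22}}\in[0,1]$ (finite by Cauchy--Schwarz, equal to $1$ exactly when the columns are colinear) as the single relevant shape parameter, so that Eq.~(\ref{eq:star}) becomes $IoS_{1,2}^*=\tfrac{1+c}{2}$.

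For the inequality $ERR(F^{[S_1,S_2]})\ge IoS_{1,2}^*$, the idea I would exploit is that the unweighted matrix $F^{[S_1,S_2]}$ is itself a particular realization of the attack‑angle matrix $W$. Scaling its columns by unit‑modulus numbers $z_1,z_2$ does not change its singular values, so evaluating $W=\varphi^T\diag(z)^\dagger F^\dagger F\diag(z)\varphi$ at any measurements with $|z_1|=|z_2|=1$ produces a $2\times 2$ matrix whose eigenvalues are exactly $\sigma_+^2$ and $\sigma_-^2$. Consequently the effective‑rank ratio of $F^{[S_1,S_2]}$, read in its eigenvalue form $\sigma_+^2/(\sigma_+^2+\sigma_-^2)$, coincides with $IoS_{1,2}(z_1,z_2)$ at this sampled point. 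Since $IoS_{1,2}^*$ is by definition the infimum of $IoS_{1,2}$ over all measurement values, it cannot exceed the value at any chosen point, and $ERR\ge IoS_{1,2}^*$ follows at once; the bound $ERR\le 1$ is immediate since the leading singular value is one of the summands in the denominator. The one delicate point is the normalization: the argument is transparent with the eigenvalue form $\sigma_+^2/(\sigma_+^2+\sigma_-^2)$, so I would make sure to use that reading of the ratio, for which $ERR$ is literally a sampled value of $IoS_{1,2}$.

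For the separation statement I would push the regime $f_{11}/f_{22}\to\infty$ (equivalently $\|F^{S_1}\|\gg\|F^{S_2}\|$) with $c$ held fixed. Expanding the radical gives $\sigma_+^2\to f_{11}$ and $\sigma_-^2\to f_{22}(1-c^2)$, hence $\sigma_-^2/\sigma_+^2\to 0$ and $ERR\to 1$, whereas $IoS_{1,2}^*=\tfrac{1+c}{2}$ is unchanged and stays strictly below $1$ as long as the columns are not colinear ($c<1$). This produces, for every prescribed $c<1$, pairs of sites for which $ERR$ is arbitrarily close to $1$ although $IoS_{1,2}^*$ is bounded away from it, which is exactly the asymmetry claimed: the structural detector based on $ERR$ fires on the near‑rank‑deficiency created by a magnitude imbalance between $F^{S_1}$ and $F^{S_2}$ that the magnitude‑blind $IoS_{1,2}^*$ cannot see. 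I expect the main obstacle to be the crux of the second paragraph, namely pinning down that sampling $W$ at unit‑modulus measurements reproduces the singular values of $F^{[S_1,S_2]}$ so that $ERR$ is one evaluation of $IoS_{1,2}$; once that identification is secured, the lower bound is a one‑line consequence of the infimum definition and the separation example is a routine magnitude asymptotic.
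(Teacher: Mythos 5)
Your proposal is correct, and the separation half coincides with the paper's: both exhibit the regime $f_{11}\gg f_{22}$ (or vice-versa) with the correlation $|f_{12}|/\sqrt{f_{11}f_{22}}$ bounded away from $1$, so that the smaller eigenvalue of the Gram matrix collapses while $IoS_{1,2}^*=\frac12+\frac{|f_{12}|}{2\sqrt{f_{11}f_{22}}}$ stays fixed below $1$. Where you genuinely differ is the inequality $ERR(F^{[S_1,S_2]})\geq IoS_{1,2}^*$. The paper proves it by brute force: it writes $IoS_{1,2}^*=\frac12+\frac12\sqrt{D^2/P}$ and $ERR=\frac12+\frac12\sqrt{(S^2-4P+4D^2)/S^2}$ with $P=f_{11}f_{22}$, $D=|f_{12}|$, $S=f_{11}+f_{22}$, and then compares the two radicands via the monotonicity of $\phi(x)=1-4(P-D^2)/x$ on $x\geq 4P$. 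You instead observe that $W$ evaluated at unit-modulus measurements is the unitary conjugate $\diag(z)^\dagger\,(F^{[S_1,S_2]})^\dagger F^{[S_1,S_2]}\,\diag(z)$, so $ERR$ (read in the eigenvalue normalization $\sigma_+^2/(\sigma_+^2+\sigma_-^2)$) is literally one sampled value of $IoS_{1,2}(z_1,z_2)$, and the bound follows from the definition of $IoS_{1,2}^*$ as an infimum --- no closed form for $IoS_{1,2}^*$ needed. This is shorter, more conceptual, and explains \emph{why} the inequality holds; the paper's computation buys instead an explicit quantitative comparison of the two radicands (and hence of the gap). Your remark on the normalization is well taken: the paper defines $ERR$ via singular values but its proof silently works with the eigenvalues of the Gram matrix (i.e.\ squared singular values); you adopt the latter reading, which is the one under which the identification with $IoS$ is exact, and is what the paper's own proof does.
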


\section{Vulnerability Conditions for an Arbitrary Number of sites}\label{sec:arbitrary}

We now show that a rank-1 TSA targeting an arbitrary number of sites $q \geq 2$ that are not vulnerable by themselves, is feasible if and only if for every pair of sites among the targeted set of sites, a rank-1 TSA is feasible.
As a result, the vulnerability analysis of a grid reduces to the vulnerability analysis of each site and each pair of sites. We then establish the relation between our result and those identified in~\cite{SDBDBP19} for the special case where all targeted sites measure only a single phasor.

\subsection{Vulnerability Conditions}

The following theorem establishes the general vulnerability condition for an arbitrary number of sites that are not vulnerable by themselves to rank-1 TSAs.
\begin{theorem}\label{th:arbitrary}
A set of $m \geq q \geq 2$ sites, each measuring an arbitrary number of phasors, such that none of the sites are vulnerable to rank-1 TSAs by themselves, are vulnerable together if and only if each pair of sites within the set is vulnerable to rank-1 TSAs.
\end{theorem}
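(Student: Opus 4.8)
The plan is to reduce the feasibility of a rank-1 TSA on $q$ sites to a single rank condition on the $m \times q$ complex matrix whose columns are the vectors $F^{S_j}z^{S_j}$, and then to invoke the elementary fact that such a matrix with all columns non-zero has rank $1$ if and only if its columns are pairwise colinear. Concretely, I would recall from the proof of Theorem~\ref{th:double} that rank-1 TSAs are feasible if and only if $\rank(W)=1$, and that $\rank(W)=\rank(F\,\text{diag}(z)\varphi)$. The $j$-th column of $F\,\text{diag}(z)\varphi$ equals $\sum_{i\in S_j}F_{:,i}z_i = F^{S_j}z^{S_j}$, so feasibility of the joint attack is equivalent to
$$\rank\left(\begin{bmatrix} F^{S_1}z^{S_1} & \cdots & F^{S_q}z^{S_q}\end{bmatrix}\right)=1,$$
a matrix I will call $M$.

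First I would use the hypothesis that no site is vulnerable by itself: by Theorem~\ref{th:single}, a site $S_j$ is vulnerable on its own precisely when $F^{S_j}z^{S_j}=0$, so the assumption guarantees that every column of $M$ is non-zero. This is exactly what makes the rank-$1$/pairwise-colinearity equivalence clean. For the forward direction, I would suppose the set is vulnerable together, i.e.\ $\rank(M)=1$; then for any pair $(S_j,S_k)$ the submatrix $\begin{bmatrix} F^{S_j}z^{S_j} & F^{S_k}z^{S_k}\end{bmatrix}$ has rank at most $1$, and since both its columns are non-zero its rank is exactly $1$, so the two columns are colinear. By item (a) of Theorem~\ref{th:double} this is precisely the statement that the pair $(S_j,S_k)$ is vulnerable to rank-1 TSAs.

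For the converse I would suppose every pair is vulnerable. By item (a) of Theorem~\ref{th:double}, for each $j$ the columns $F^{S_1}z^{S_1}$ and $F^{S_j}z^{S_j}$ are colinear, so there is a scalar $\lambda_j\in\mathbb{C}^*$ with $F^{S_j}z^{S_j}=\lambda_j F^{S_1}z^{S_1}$ (non-zero because the column is non-zero, with $\lambda_1=1$). Hence every column of $M$ lies in the one-dimensional span of the non-zero vector $F^{S_1}z^{S_1}$, giving $\rank(M)=1$, which is the joint feasibility condition.

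I do not expect a serious obstacle: the statement is essentially the observation that for a matrix with no zero column, ``rank $1$'' and ``columns pairwise parallel'' coincide, combined with the pairwise characterization already established in Theorem~\ref{th:double}(a). The only point requiring slight care is the converse, where pairwise colinearity must be promoted to a common direction; I would handle this by fixing $F^{S_1}z^{S_1}$ as a reference column and noting that colinearity with a single non-zero vector forces every other column into its span.
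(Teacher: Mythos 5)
Your proposal is correct and follows essentially the same route as the paper: reduce feasibility to the condition $\rank(F\,\mathrm{diag}(z)\varphi)=1$, note that single-site non-vulnerability makes every column $F^{S_j}z^{S_j}$ non-zero, and then use the equivalence between rank $1$ and pairwise colinearity of non-zero columns. If anything, your converse is slightly more careful than the paper's, which asserts that pairwise dependence of the columns implies rank $1$ without explicitly fixing a reference column to promote pairwise colinearity to a common span.
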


\begin{proof}
Define $S_1$, $S_2$, ..., $S_q$ to be the set of measurement indices corresponding to phasors measured in the first, second, up to $q^{th}$ targeted sites, respectively.
Then, the rank of the attack-angle matrix $W$ corresponds to the rank of the following matrix
\begin{equation*}
T=
\begin{bmatrix}
\sum_{i \in S_1} F_{1,i}z_i & \cdots & \sum_{i \in S_q} F_{1,i}z_i\\
\vdots & &\vdots 
\end{bmatrix}.
\end{equation*} 
The rank of this $m$ by $q$ matrix is equal to $1$ if and only if all columns of $T$ are dependant. As none of the sites are vulnerable to rank-1 TSAs by themselves, no column of $T$ is equal to the null vector. Therefore, all sub matrices consisting of two columns of $T$ must be of rank equal to $1$. This means that the attack-angle matrix $W$ that corresponds to the attack targeting the corresponding two sites is of rank equal to $1$. In other words, if a set of $q$ sites is vulnerable to rank-1 TSAs, then any two sites within the set of targeted sites are also vulnerable to rank-1 TSAs.
 
Similarly, if there is a set of $q$ sites such that all pairs of sites within the set are attackable undetectably, then all columns of $T$ are dependant, which means that its rank is equal to $1$. Therefore the large set of $q$ sites is vulnerable to rank-1 TSAs.
\end{proof}

Theorem~\ref{th:arbitrary} implies that a site is vulnerable to rank-1 TSAs either if it is vulnerable by itself or if its combination with at least one other site forms a vulnerable set where all pairs of sites are vulnerable together. Hence, mitigating the feasibility of rank-1 TSAs for each site and each pair of sites is sufficient to mitigate the attack feasibility of the grid. Table~\ref{table:sumup} recapitulates the vulnerability conditions and distance to vulnerability metrics for a site and for a pair of sites with non-critical measurements.

\begin{table*}[t]
\centering
\begin{tabular}{l|l|l|l|}
\cline{2-4}
                                    & structural vulnerability                                               & general vulnerability & distance metric \\ \hline
\multicolumn{1}{|l|}{single site}   &           none                                                  & $F^{S_1} z_{S_1}=0$          & $\| R^{S_1} z_{S_1} \|\approx 0$   \\ \hline
\multicolumn{1}{|l|}{pair of sites} & ERR$(F^{[1,2]}) \approx 1$ & $F^{S_1}z^{S_1}$ and $F^{S_2}z^{S_2}$ colinear & ERR$(\begin{bmatrix} R^{S_1}z^{S_1} | R^{S_2}z^{S_2} \end{bmatrix})\approx 1$     \\  \hline
\end{tabular}
\caption{Vulnerability conditions and distance to vulnerability metric for each site and pair of sites of a grid with non-critical measurements.}
\label{table:sumup} 
\end{table*}

\subsection{Relation with the Results of~\cite{SDBDBP19}}

The authors of~\cite{SDBDBP19} show that measurements can be grouped in equivalence classes, when the IoS values of the attack-angle matrices of all pairs of measurements are equal to $1$. Then, they show that a set of measurements is vulnerable to rank-1 TSAs if and only if the set of measurements belong to the same equivalence class. In other words, they show that a set of sites, each measuring a single phasor, is vulnerable to rank-1 TSAs if and only if all pairs of sites within the targeted set is vulnerable. \jylb{Recall that item $(a)$ of theorem~\ref{th:double} defines an equivalence relation over the set of sites that are not vulnerable by themselves. Hence, Theorem~\ref{th:arbitrary} shows that a set of non-vulnerable sites is vulnerable if and only if the sites belong to the same equivalence class. Therefore,Theorem~\ref{th:arbitrary} applied to sites measuring a single phasor coincides with the result of~\cite{SDBDBP19}.}

\section{Mitigating rank-1 TSAs}\label{sec:att}

To minimize the feasibility of rank-1 TSAs, we now combine results from Sections~\ref{sec:single},~\ref{sec:double} and~\ref{sec:arbitrary}. We propose a greedy offline algorithm to ensure that no pair of sites is structurally vulnerable. Even if a grid is not structurally vulnerable, there is still an unlikely possibility that measurement values are such that some sites or pairs of sites are vulnerable. In order to check the non-vulnerability of the system, we recommend the monitoring of the vulnerability metrics.

\subsection{Securing against the Structural Vulnerability Condition}

Apart from vulnerability conditions, Theorem~\ref{th:double} also identifies a structural non-vulnerability condition. Item (c) of Theorem~\ref{th:double} states that if the combined set of measurements from the two sites does not form a critical set, then they are not vulnerable to rank-1 TSAs, irrespective of their measurement values. Hence, a natural idea to secure all pairs of sites is to ensure that none of them forms a critical set of measurements. However, from an engineering perspective, it is not realistic to impose this security measure, as it would either be impossible to enforce or require much redundancy in the measurements. This would require substantially more PMUs than what is required for observability of the system, which would be costly. For example, by placing PMUs on every bus of the grid used for simulations in Section~\ref{sec:sim}, we obtain that $34 \%$ of pairs are still critical.

In contrast, it is possible to carefully increase the measurement redundancy of an observable grid's PMU allocation such that no pair of sites are structurally vulnerable. Specifically, a security requirement is that the ERR of $F^{[S_1,S_2]}$ cannot be close to $1$ for all pairs of sites. 

Recall that theorem~\ref{th:p-k} implies that $ERR (F^{[S_1,S_2]})=1$ if and only if all pairs of measurements in $S_1 \cup S_2$ are critical. In an observable system, if a pair of sites is such that no measurement is critical and all pairs are critical, by adding one phasor measurement in one of the two sites, at least one pair of measurements will be non-critical. As a result, the ERR of $F^{[S_1,S_2]}$ will no longer be equal to $1$. However, the practical security requirement is stronger and requires that the ERR is not close to $1$. Therefore, to secure an observable but vulnerable grid, we propose to identify vulnerable pairs of sites such that the ERR of $F^{[S_1,S_2]}$ is close to $1$ and to iteratively increase the number of phasors that they should measure until no critical pair of sites has a high ERR$(F^{[S_1,S_2]})$ value. A greedy strategy is to increase the number of measured phasors at sites that appear most frequently in the list of vulnerable pairs of sites. \jylb{Algorithm~\ref{alg:sec} implements this greedy strategy by recursively building the secured set of measurement points.} It takes as input the set of measurement points of the observable grid and it outputs a larger set of measurement points which includes the input set and the additional phasors required for structural security. Note that this algorithm only secures against structural vulnerabilities, it can thus be performed offline at each change of topology. 

The authors of~\cite{SDBDBP19} present a rank-1 TSA targeting $q=5$ structurally vulnerable sites. In Section~\ref{sec:sim}, we secure the grid using Algorithm~\ref{alg:sec} and show that the attack is no longer feasible.

\begin{algorithm}[h]
  \caption{Secure-Grid($M$)}
  \begin{algorithmic}
  	\Require $M$ (set of measurement points for an observable grid)
  	\Ensure $M_{new}$ (updated set of measurement points for an observable and structurally non-vulnerable grid)
  	\texttt{\\}
  	\State $H \leftarrow$ Create topology matrix from admittance and $M$
  	\State $Vulnerable \leftarrow \emptyset$
  	\State $F \leftarrow H(H^\dagger  H)^{-1}H^\dagger - Id$
  	\For {site $i$ in grid}
  		\State $S_i=M(i) $
  		\For {$j \neq i$ in grid}
  			\State $S_j=M(j) $
  			\State $\Lambda \leftarrow \text{Singular-Values} (F^{[S_i,S_j]})$
  			\If {$\frac{\max(\Lambda)}{\sum \Lambda} \geq \eta$}  
  				\State $Vulnerable \leftarrow Vulnerable \cup {(i,j)}$
  			\EndIf
  		\EndFor
  	\EndFor
  	\If {$Vulnerable \neq \emptyset$}
  	\While {$Vulnerable\neq \emptyset$}
  		\State $freq \leftarrow$ Get the most frequent index in $Vulnerable$

  			\State $M \leftarrow M \cup freq$
  			\For {$tuple \in Vulnerable$}
  				\If {$freq \in tuple$}
  					\State $Vulnerable \leftarrow$ Remove $tuple$ from $Vulnerable$
  				\EndIf
  	  			\EndFor	
  	\EndWhile
  	\State $M \leftarrow  \text{Secure-Grid}(M)$
  	  	\EndIf
  	  	\State $M_{new} \leftarrow M$ \\
  		\Return $M_{new}$
  \end{algorithmic}
   \label{alg:sec}
\end{algorithm}

\subsection{Monitoring the General Vulnerability Metrics}

Once the grid is secured against structural vulnerabilities, the measurements of sites or of pairs of sites can still satisfy the general vulnerability conditions identified by item (b) of Theorem~\ref{th:single} and by item (a) of Theorem~\ref{th:double}. As discussed in Sections~\ref{sec:singledisc} and~\ref{sec:doubledisc}, we conjecture that such conditions are unlikely to be satisfied in reality. By precaution, we propose to compute, at every estimation of the system's state, $\| R^{S_1} z^{S_1} \|$ for every site and ERR$(\begin{bmatrix} R^{S_1}z^{S_1} | R^{S_2}z^{S_2} \end{bmatrix}) $ for every pair of sites. If over time it can be observed that a site or a pair of sites is frequently close to vulnerability, then the measurements satisfy either exactly or approximately a dangerous relation. In this case, we recommend breaking this relation by modifying the PMU allocation around the corresponding site or pair of sites.


\section{Simulations}\label{sec:sim}

\jylb{We validate our results with simulations on the IEEE 39 bus benchmark with real load profiles taken from the Lausanne grid at $50$ Hz. We apply the security requirements, established in the previous section, and show that they achieve the desired security.}

\subsection{Electrical Model}
The PMU allocation we consider is depicted in Figure~\ref{gridunlikely}. Specifically, there are $12$ zero-injection buses, PMUs measuring both voltages and currents, are placed at buses $\{ 30, 37, 28, 38, 18, 39, 12, 16, 7, 31, 32, 34, 33,20,25,26,29 \}$ and PMUs measuring only currents are placed at buses $\{ 24,35,15,21,4,23,36 \}$. We define sites to be groups of buses separated by transformers only: $\{ 2,30\}$, $\{ 6,31\}$, $\{ 10, 32\}$, $\{ 11,12,13\}$, $\{19,20,33,34\}$, $\{22,35\}$,$\{ 23,36\}$, $\{ 25,37\}$, $\{ 29,38\}$, the other buses correspond to one-bus sites. With this allocation, no measurement is critical by itself, no PMU is critical by itself but some sites are critical. 

\begin{figure}
\centering
    \includegraphics[scale=0.45]{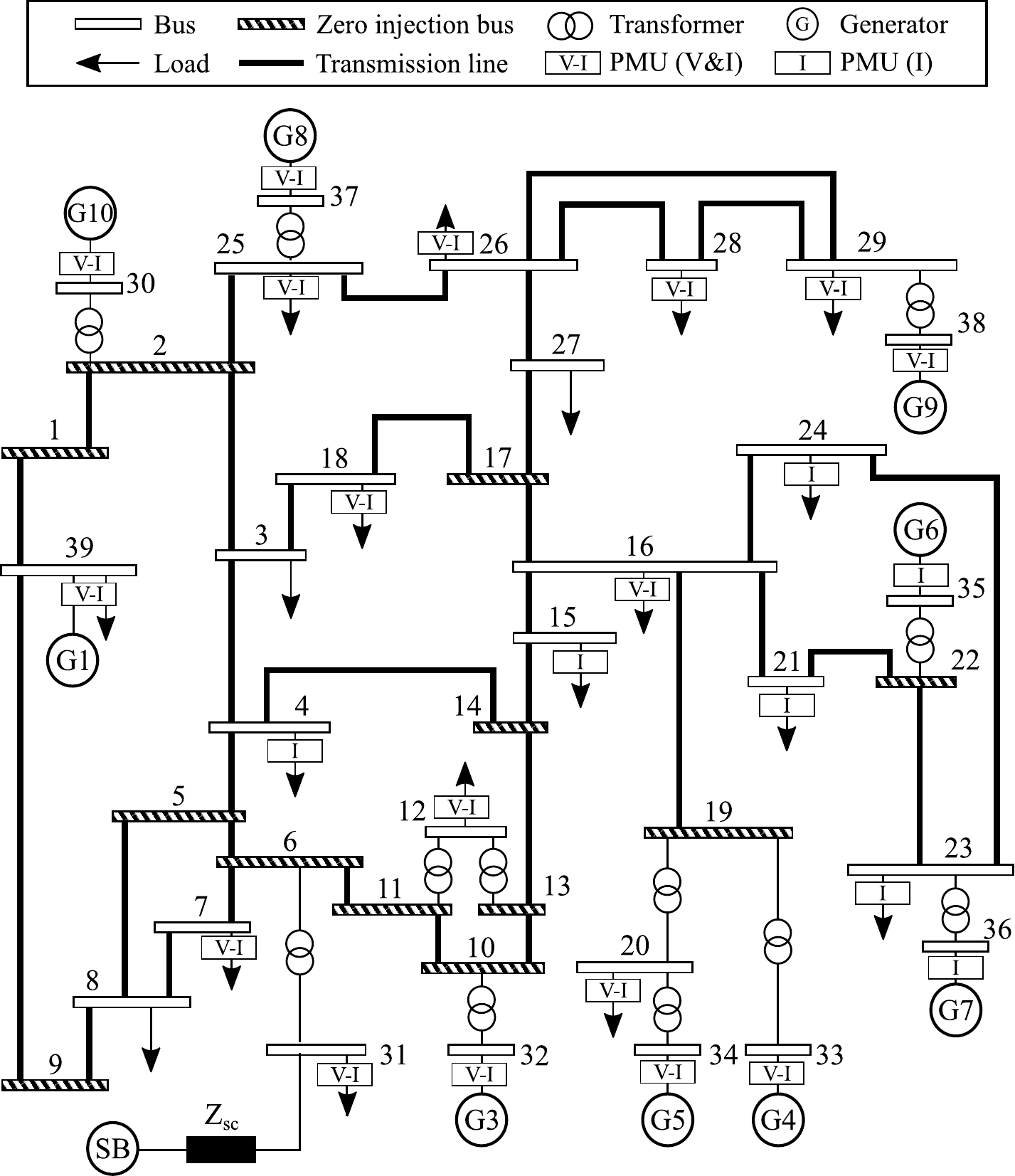}
\caption{ PMU allocation on the IEEE-39 bus benchmark. Buses separated by a transformer are grouped in a site: PMUs in a site share the same time reference.}
\label{gridunlikely}
\end{figure}

Our simulations are done every $20$ms over $700$s, thus at $35'000$ different time instants. At each time instant, we create a measurement vector by computing the load flow. This results in the true state of the system. We then add randomly generated Gaussian noise to the true state, which results in the simulated measurement vector $z$.

\subsection{Securing a Grid against Structural Vulnerabilities}\label{subsec:securing}

The grid described above features pairs of sites such that all pairs of measurements are critical, i.e. such that $\rank(F^{[S_1,S_2]})=1$. In other words, they are structurally vulnerable to rank-1 TSAs. It is the case for all pairs of sites among $\{ 21 \}$, $\{ 22,35 \}$, $\{ 23,36 \}$ and $\{ 24 \}$, which means that a rank-1 TSA can be mounted on any combination of these sites. In fact, the authors of~\cite{SDBDBP19} present rank-1 TSAs on various combinations of the involved buses. Therefore, the grid needs to be secured from this structural vulnerability. 

In order to gain insights on how vulnerable the other pairs of sites are in practice, we compute the ERR of their corresponding matrix $F^{[S_1,S_2]}$. The results given in Figure~\ref{fig:presec} reflect how close to $1$ the ERR of $F^{[S_1,S_2]}$ matrices are. We observe that the median of ERRvalues is at $0.556$ and that all values are between $0.2114$ and $0.9939$. The closer the ERR of $F^{[S_1,S_2]}$ is to $1$, the more vulnerable the pair of sites is. Clearly, we can observe that several pairs of sites are vulnerable to rank-1 TSAs in practice, even though they have a non-critical pair of measurements.
 
\begin{figure}
\centering
    \includegraphics[scale=0.5]{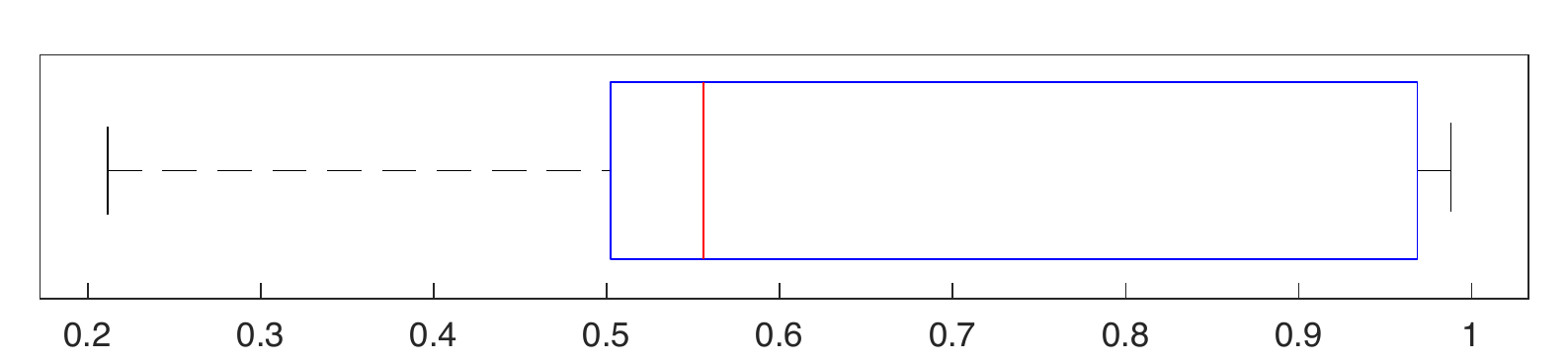}
\caption{Distribution of ERR values of $F^{[S_1,S_2]}$ matrices for all pairs of sites before applying Secure-Grid: some values are close to $1$, the corresponding pairs of sites are structurally vulnerable in practice. }
\label{fig:presec}
\end{figure}

Secure-Grid applied to this grid outputs a set of measurement points that includes additional phasor measurements. Specifically, buses $4$, $15$, $21$, $24$, $35$ and $36$ are required to measure an additional phasor. 
We observe on Figure~\ref{fig:sec} that there are no longer any pair of sites such that the ERR of $F^{[S_1,S_2]}$ is close to $1$. Also on Figure~\ref{fig:sec}, we observe that the IoS of the attack-angle matrix $W$, at all $35'000$ time instants is always much smaller than $1$; i.e., there are no grid locations that are structurally vulnerable to rank-1 TSAs. 

\begin{figure}
\centering
    \includegraphics[scale=0.5]{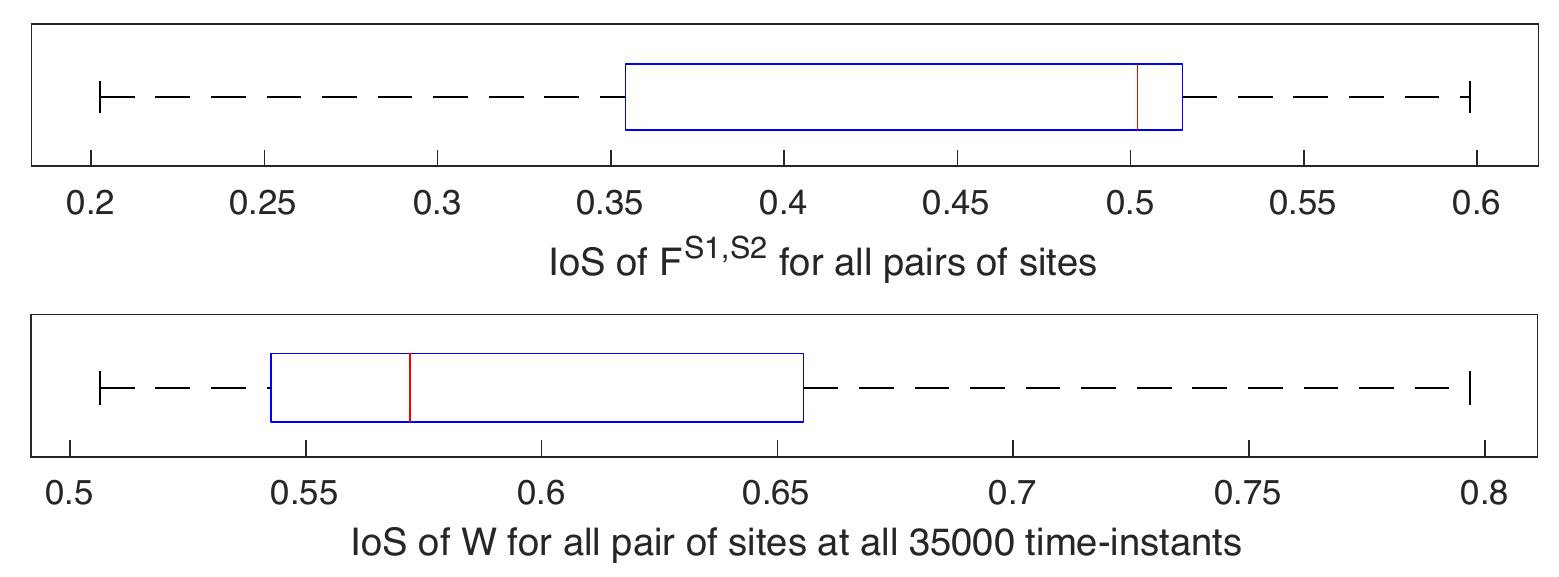}
\caption{Distribution of ERR values of $F^{[S_1,S_2]}$ and IoS values of $W$ matrices for all pairs of sites after applying Secure-Grid: none are close to $1$, no pairs of sites are structurally vulnerable in practice.}
\label{fig:sec}
\end{figure}

The authors of~\cite{SDBDBP19} also present an attack on a set of three buses with different time references, which involves a total of $5$ measurements. This attack is performed on a grid with a slightly different PMU allocation. Specifically, only one phasor is measured at bus $26$ and no phasors are measured at bus $29$. In this setting, they presented an attack on buses $26$, $28$ and $38$.  Notice that buses $28$ and $38$ both measure two phasors simultaneously. This successful attack was found by trials and errors but was not explained by the theory of~\cite{SDBDBP19}. We now understand that this set of buses was in fact structurally vulnerable because the rank of the corresponding sub matrix of $F$ is equal to $1$. We are now also able to secure the grid against this attack by adding phasors to measure at buses $26$ and $29$. Figure~\ref{fig:previousattack} compares the servo-aware attack impact and LNR values before using Secure-Grid (i.e. the results presented in~\cite{SDBDBP19}) and after using Secure-Grid. It shows that the undetectable attack from~\cite{SDBDBP19} becomes clearly detectable once additional phasors are measured at the buses identified by Secure-Grid. The change of values in the middle of the simulation is due to a sudden increase of factor $2$ in the active power introduced at a nearby bus. All attacks presented in~\cite{rank1} and~\cite{SDBDBP19} targeted sets of PMUs which were in fact structurally vulnerable to rank-1 TSAs. Our new security requirement therefore prevents all of them.

\begin{figure*}[t]
\centering
\includegraphics[width=\linewidth]{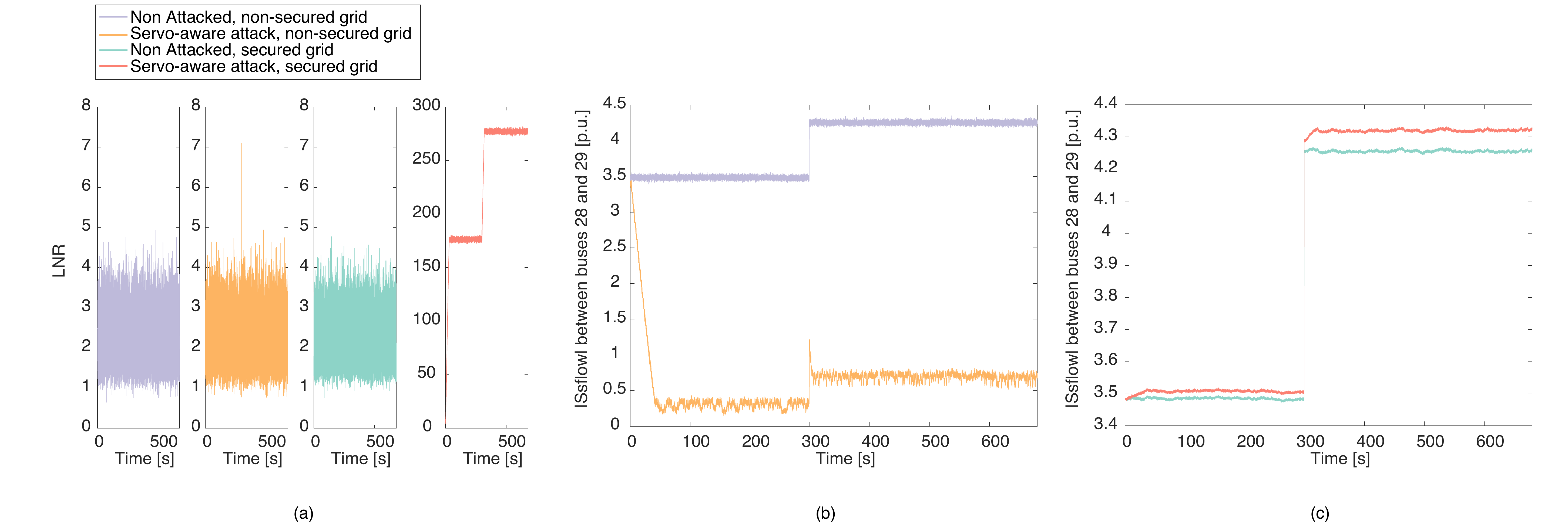}
\caption{Comparison of the impact and detectability of the servo-aware attack targeting buses $26$, $28$ and $38$ before and after using Secure-Grid: (a) LNR values show that the servo-aware attack is undetectable before Secure-Grid and detectable after it; (b) The magnitude of the power-flow with and without the attack \emph{before} Secure-Grid, the undetectable attack has a large impact; (c) The magnitude of the power-flow with and without the attack \emph{after} Secure-Grid, the undetectable attack has an impact but is very detectable.}
\label{fig:previousattack}
\end{figure*}

Even though the resulting grid is not structurally vulnerable, it is still possible that measurement values satisfy the general vulnerability conditions for a site or a pair of sites. We show that such specific conditions are far from appearing on our realistic grid.

\subsection{General Vulnerability Condition for Single Sites}\label{subsec:cond1}

At each of the $35'000$ time instants of the simulation, we compute the metric $\| R^{S_1}z^{S_1} \|$ introduced in Section~\ref{subsec:dist1}. Recall that this metric is equal to $0$ if and only if the site is vulnerable. The distribution of the obtained values of all sites are shown in Figure~\ref{fig:single}. We observe that the metric is never equal to $0$, the most vulnerable site has a metric that is on average equal to $0.664$. To illustrate that $0.664$ reflects that the site is far from vulnerable in practice, we perform an attack on the corresponding site $\{ 31\}$ with a constant offset of $20 \mu s$, which is the maximum offset allowed by the PMU clock controllers in~\cite{SDBDBP19}. Recall that a vulnerable single site can be attacked undetectably with any attack-angle, including $20 \mu s$. The obtained largest normalized residuals are approximately $6$ times larger than those obtained without an attack. Such a difference is easily identified. In other words, the site that is the closest to satisfying the single-site vulnerability condition is far from vulnerable in practice. As a result, we observe that the measurement values of all sites of the grid are always far from satisfying the conditions necessary to mount a rank-1 TSA.
\begin{figure}
\centering
    \includegraphics[scale=0.45]{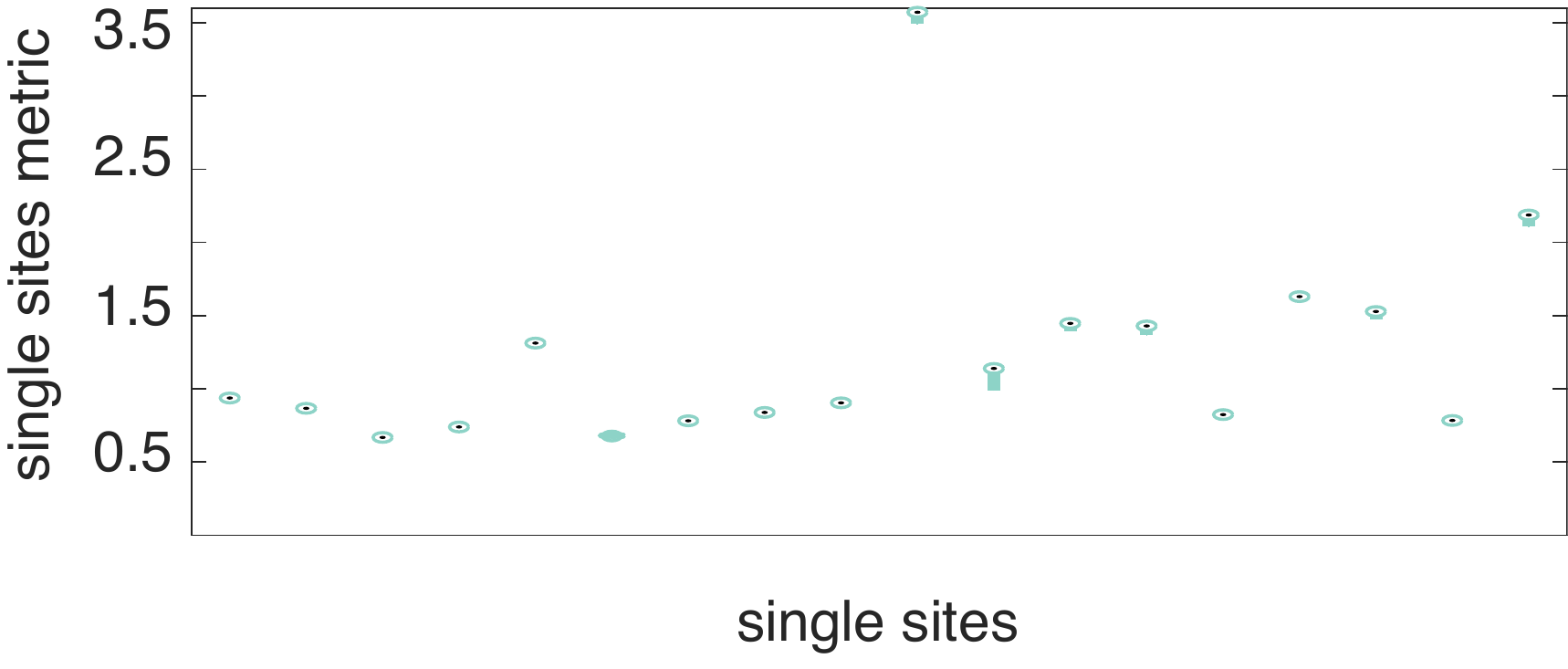}
\caption{After Secure-Grid: Distribution of the values of $\| R^{S_1}z^{S_1} \|$ for each site at $35'000$ time instants. It is never equal to $0$: the condition for a rank-1 TSA on a single site is never satisfied.}
\label{fig:single}
\end{figure}

\subsection{General Vulnerability Condition for Pairs of Sites}\label{subsec:cond2}

At each of the $35'000$ time instants of the simulation we compute the vulnerability metric introduced in Section~\ref{subsec:dist2} for each pair of sites. Specifically, for each pair of sites we compute ERR$(\begin{bmatrix} R^{S_1}z^{S_1} | R^{S_2}z^{S_2} \end{bmatrix}) $. Recall that the closer the metric is to $1$, the more vulnerable the pair of sites is. The distribution of the obtained values for every pair of sites at all time instants are given in Figure~\ref{fig:double}. We observe that the maximum value is $0.843$, which is not close enough to $1$ for undetectable attacks. Figure~\ref{pairdetect} shows that an attack on the corresponding pair of sites (i.e. the one with the highest vulnerability metric values) is detectable. All the other pairs have vulnerability metric values that are even further from satisfying the vulnerability conditions. As a result, we observe that at each time instant of the simulation, the measurement values of all pairs of sites of the secured grid are far from satisfying the conditions necessary to mount rank-1 TSAs.

\begin{figure}
\centering
    \includegraphics[scale=0.45]{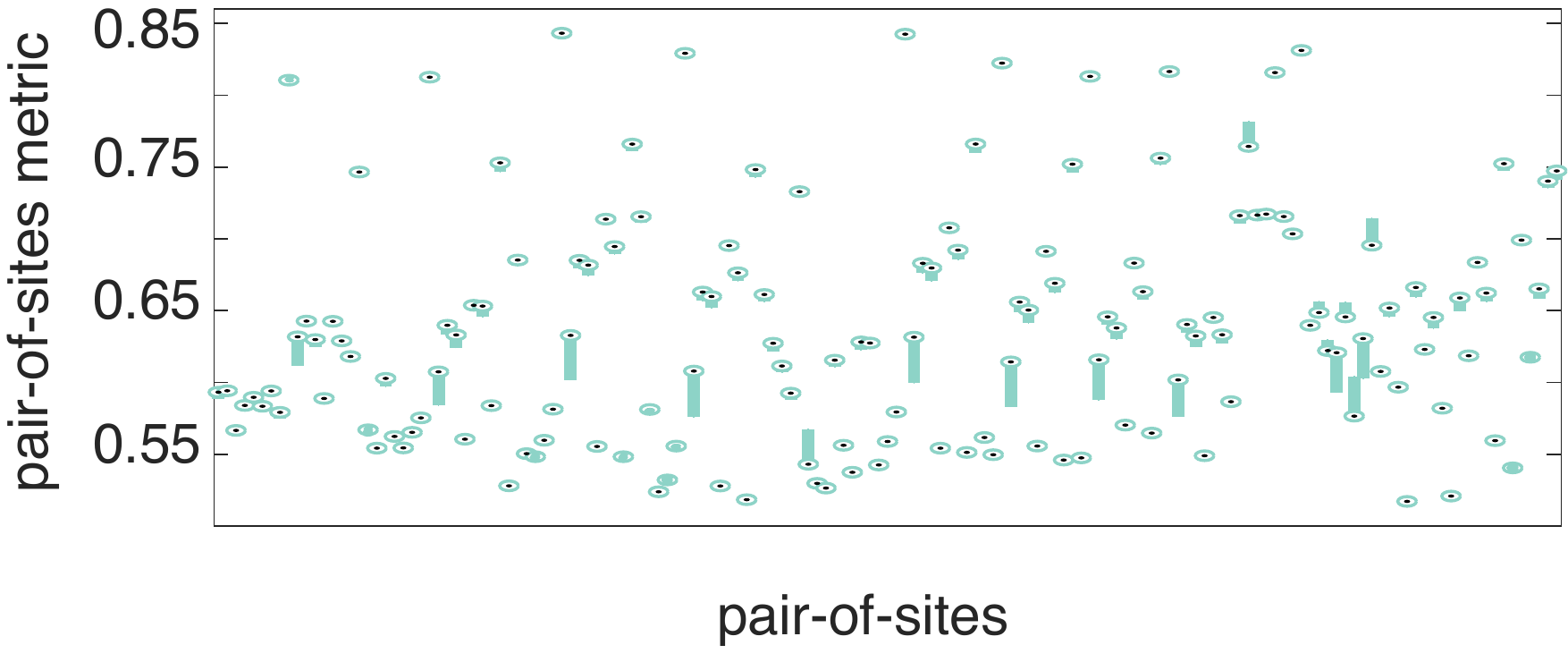}
\caption[]{After Secure-Grid: Distribution of the values of ERR$(\begin{bmatrix} R^{S_1}z^{S_1} | R^{S_2}z^{S_2} \end{bmatrix}) $ for each pair of sites at $35'000$ time instants. The most vulnerable pair has a mean metric value of $0.8439$, which is too far from $1$ for undetectable attacks.}
\label{fig:double}
\end{figure}

\begin{figure}
\centering
    \includegraphics[scale=0.45]{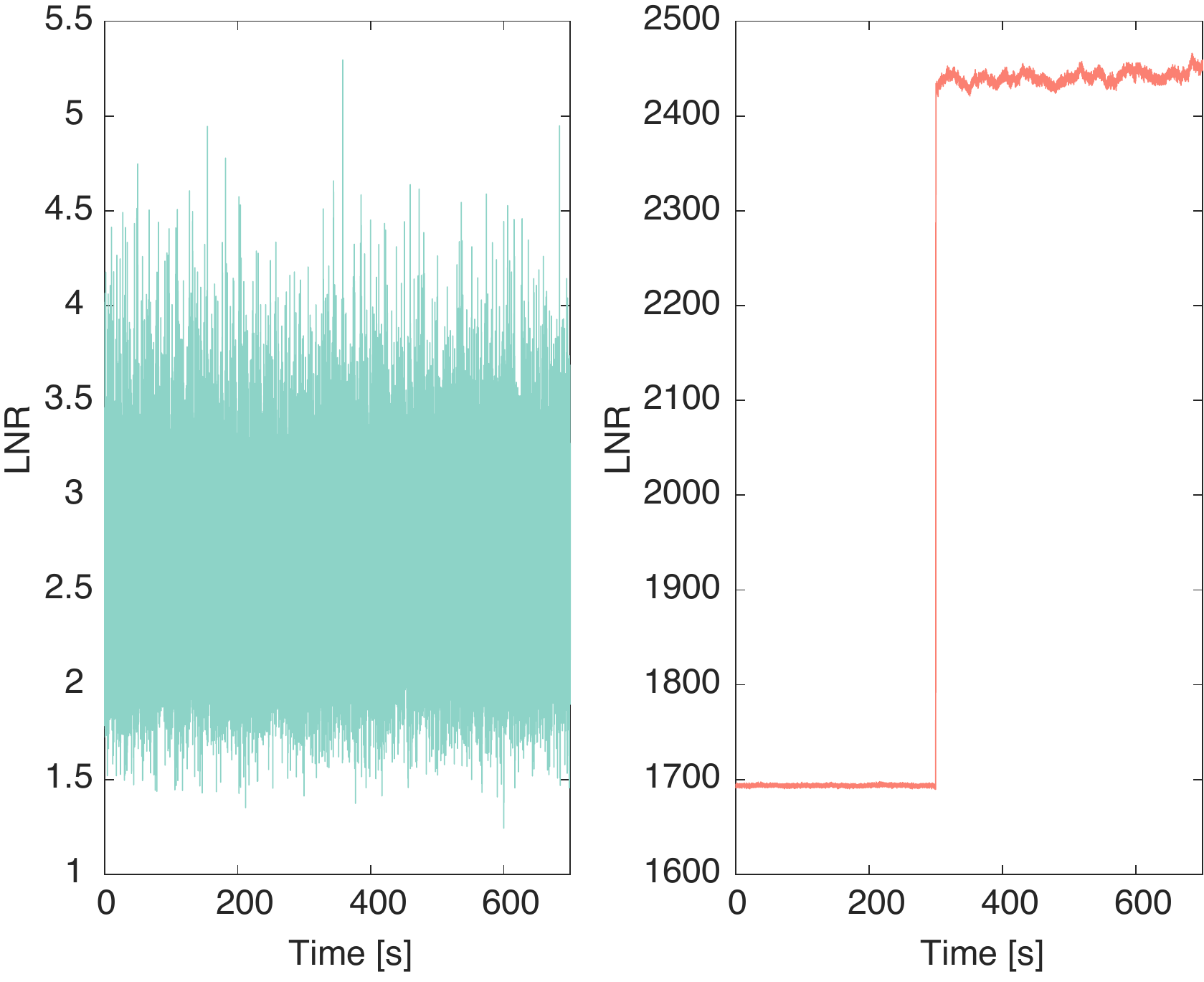}
\caption[]{After Secure-Grid: LNR values without an attack (left) and with an attack on the most vulnerable pair of sites (right); the values are clearly distinguishable, the attack is detected.}
\label{pairdetect}
\end{figure}

\section{Conclusion}\label{sec:ccl}

We showed that the analysis of the vulnerability of a grid to rank-1 TSAs reduces to the vulnerability analysis for every site and every pair of sites. We identified a sufficient vulnerability condition for pairs of sites that measure an arbitrary number of phasors. This condition does not depend on the measurement values. We established a security requirement to prevent this vulnerability. We also provided an offline greedy algorithm that enforces our security requirement. If our security requirement is satisfied, it is still possible that measurement values are such that attacks are feasible, although we conjecture that it is unlikely. We identified sufficient and necessary vulnerability conditions for single sites and for pairs of sites, each measuring an arbitrary number of phasors. We recommend the monitoring of two metrics associated to these conditions in order to check the non-vulnerability of the grid. Numerical results, on the IEEE-39 bus benchmark with real load profiles from the Lausanne grid, show that the measurements of a grid satisfying our security requirement are far from vulnerable to rank-1 TSAs. Finally, our results, applied to sites measuring a single phasor, coincide with the results of~\cite{SDBDBP19}. Our new theory aslo enables us to better understand and thus prevent attacks presented in~\cite{SDBDBP19}.

\bibliographystyle{IEEEtran}
\bibliography{references}

\appendix
\subsection{Proof of theorem~\ref{cl:main}}

\begin{proof} 
\begin{itemize}
\item \emph{$S$ non-critical $\rightarrow F^S$ full rank:} Assume that $F^S$ is not full rank and that $|S|=p$, hence there exists a $ y \in\Comps^{p}$ such that $\begin{bmatrix} F^S \end{bmatrix} \begin{pmatrix} y \end{pmatrix} =0$. Thus, by construction $\begin{bmatrix} F \end{bmatrix} \begin{pmatrix} y_1 & ... & y_p  & 0 & ... & 0 \end{pmatrix}^T=0$. Hence, by definition of $F$, vector $v= (H^\dagger H)^{-1}H^\dagger \begin{pmatrix} y \\ 0 \end{pmatrix} \in \Comps^{N}$ is such that $Hv= \begin{pmatrix} y \\ 0 \end{pmatrix}$. Hence, the submatrix of $H$ obtained by removing it's first $p$ rows is not full rank, which implies that $S$ is critical. We have shown that if $F^S$ is not full rank, then $S$ is critical. This is equivalent to showing that if $S$ is non-critical, then $F^S$ is full rank.

\item \emph{$S$ non-critical $\leftarrow F^S$ full rank:}
If $S$ is critical, then there exists a $v \in\Comps^{N}$ such that $Hv=\begin{pmatrix} h_1 v & ... & h_p v & 0 \end{pmatrix}^T$. Because $H$ is full rank, it must be that at least one value $h_i v$ for $1 \leq i \leq p$ is non-zero. Hence, vector $\begin{pmatrix} h_1 v & ... & h_p v & 0 \end{pmatrix}^T$ is in the range of $H$, this means that its orthogonal projection on $Im(H)$ is equal to itself. By definition of $F$ it must be that $\begin{bmatrix} F \end{bmatrix} \begin{pmatrix} h_1 v & \cdots & h_p v & 0 \end{pmatrix}^T=\begin{bmatrix} F^S \end{bmatrix} \begin{pmatrix} h_1 v & \cdots & h_p v  \end{pmatrix}^T=0.$ Therefore, $F^S$ is not full rank. We have shown that if $S$ is critical, then $F^S$ is not full rank. This is equivalent to showing that if $F^S$ is full rank, then $S$ is non-critical.
\end{itemize}
We conclude that $F^S$ is full rank if and only if $S$ is non-critical.
\end{proof}

\subsection{Proof of theorem~\ref{th:p-k}}

\begin{lem}\label{cl:2} 
If the set of measurements with indices in $S$ is critical minimal, then $\mbox{rank}(F^S)=p-1$. 
\end{lem}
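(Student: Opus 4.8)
The plan is to sandwich $\rank(F^S)$ between $p-1$ and $p-1$ by invoking the criticality characterization of Theorem~\ref{cl:main} twice: once on $S$ itself to obtain the upper bound, and once on a maximal proper subset of $S$ to obtain the lower bound. Throughout, I write $p=|S|$ and read ``critical minimal'' as: $S$ is critical, while every proper subset of $S$ is non-critical.

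For the upper bound, I would note that $S$ is critical, so by Theorem~\ref{cl:main} the matrix $F^S$ is not full rank. Since $F^S$ has exactly $p$ columns, this immediately gives $\rank(F^S)\leq p-1$.

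For the lower bound, I would use minimality. Fix any index $i\in S$ and set $S'=S\setminus\{i\}$, a set of size $p-1$. By minimality of $S$, this proper subset $S'$ is non-critical, so Theorem~\ref{cl:main} shows that $F^{S'}$ is full rank, i.e.\ $\rank(F^{S'})=p-1$. Now $F^{S'}$ is precisely the column submatrix of $F^S$ obtained by deleting the single column indexed by $i$, and deleting a column can only decrease the rank; hence $\rank(F^S)\geq\rank(F^{S'})=p-1$. Combining the two bounds yields $\rank(F^S)=p-1$.

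There is no real obstacle here: essentially all of the content is already packaged in Theorem~\ref{cl:main}, and the argument reduces to pairing its two directions. The only points requiring a moment's care are interpreting ``critical minimal'' correctly (critical, with all proper subsets non-critical) and observing that a full-rank $(p-1)$-column submatrix already forces $\rank(F^S)\geq p-1$, since removing one column drops the rank by at most one.
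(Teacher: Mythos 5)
Your proof is correct and follows essentially the same route as the paper: the upper bound from Theorem~\ref{cl:main} applied to the critical set $S$, and the lower bound from applying it to a non-critical $(p-1)$-subset whose column submatrix is then full rank. The only cosmetic difference is that you fix a single subset $S'$ where the paper observes that every $(p-1)$-column submatrix is full rank; one suffices, as you note.
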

\begin{proof} 
If $S$ is critical, then theorem~\ref{cl:main} implies that $F^S$ is not full rank and hence the $p$ columns are dependent: $\mbox{rank}(F^S) \leq p-1.$ Also, if $S$ is critical minimal, then any subset of $p-1$ measurements included in $S$ is non-critical. From theorem~\ref{cl:main}, this means that any submatix $F^{S-1}$, obtained by removing a column of $F^S$, is full rank. Hence, any subset of $p-1$ columns of $F^S$ is independent, this implies that $\mbox{rank}(F^S) \geq p-1.$ Therefore the bound is tight: $\mbox{rank}(F^S) = p-1$.
\end{proof}

\begin{lem}\label{cl:3} 
If $\mbox{rank}(F^S)=p-1$, then the set of measurements with indices in $S$ is critical but not always minimal.
\end{lem}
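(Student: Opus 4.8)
The plan is to split the statement into its two halves: that $\mathrm{rank}(F^S)=p-1$ forces $S$ to be critical, and that it does \emph{not} force $S$ to be critical minimal. The first half is immediate: since $p-1<p=|S|$, the submatrix $F^S$ is rank-deficient, hence not full rank, and by Theorem~\ref{cl:main} this is exactly equivalent to $S$ being critical. This is simply the converse of the easy direction already invoked in Lemma~\ref{cl:2}.

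For the second half, I would reason through the support of the kernel. Because $\mathrm{rank}(F^S)=p-1$, rank--nullity gives $\dim\ker(F^S)=1$; let $y$ span this kernel. The key observation is that if some coordinate $y_i$ vanishes, then the restriction of $y$ to $S\setminus\{i\}$ is still a nonzero kernel vector of $F^{S\setminus\{i\}}$, so that submatrix is rank-deficient and, by Theorem~\ref{cl:main}, $S\setminus\{i\}$ is \emph{also} critical; a proper critical subset then witnesses that $S$ is not minimal. The whole question therefore reduces to exhibiting one instance in which the one-dimensional kernel of $F^S$ has a zero entry.

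To produce such an instance, I would build it from a critical minimal set. Take a critical minimal set $S_0$ with $|S_0|=k$; by Lemma~\ref{cl:2} we have $\mathrm{rank}(F^{S_0})=k-1$, with one-dimensional kernel spanned by some $y_0$. Now append a single measurement $j\notin S_0$ whose column $F_{:,j}$ is linearly independent of $\mathrm{span}\{F_{:,i}:i\in S_0\}$, and set $S=S_0\cup\{j\}$, so $p=|S|=k+1$. Adjoining an independent column raises the rank by exactly one, giving $\mathrm{rank}(F^S)=(k-1)+1=k=p-1$. A short computation then shows that any kernel vector $(a,b)$ of $F^S=[\,F^{S_0}\mid F_{:,j}\,]$ must have $b=0$, for otherwise $F_{:,j}$ would lie in the column space of $F^{S_0}$; hence the kernel of $F^S$ is spanned by $(y_0,0)$, which has a zero at the coordinate of $j$. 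Consequently $S_0\subsetneq S$ remains critical while $\mathrm{rank}(F^S)=p-1$, so $S$ is critical but not minimal.

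The algebra above is routine, so the main obstacle I anticipate is the realizability check: I must make sure this configuration is not vacuous, i.e. that an observable grid can actually carry a critical minimal set $S_0$ together with a further measurement whose $F$-column is independent of those of $S_0$. Since the claim is only that minimality \emph{can} fail (``not always''), a single such grid suffices; I would settle this either by the abstract independence argument on a topology with enough redundancy, or by pointing to a small explicit benchmark instance.
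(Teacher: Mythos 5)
Your proposal is correct, and for the interesting half of the statement it takes a genuinely different route from the paper. For criticality you simply invoke Theorem~\ref{cl:main} (rank deficiency of $F^S$ is equivalent to $S$ being critical), whereas the paper re-derives this from scratch by exhibiting the state vector $u=(H^\dagger H)^{-1}H^\dagger (x,0)^T$ with $Hu=(x,0)^T$; these are the same argument in substance, since that computation is exactly the proof of Theorem~\ref{cl:main}. The real divergence is in the ``not always minimal'' part. The paper disposes of it in one sentence: the kernel vector $x$ of $F^S$ is only required to be nonzero, so it may have zero entries, and hence minimality is not forced. You instead give a constructive witness: starting from a critical minimal set $S_0$ (rank $k-1$ by Lemma~\ref{cl:2}) and adjoining a measurement $j$ whose column is independent of the columns of $F^{S_0}$, you show the kernel of $F^{S}$ is spanned by $(y_0,0)$, so $S_0\subsetneq S$ is a proper critical subset while $\mathrm{rank}(F^S)=p-1$. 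Your preliminary observation --- that a zero coordinate in the spanning kernel vector yields a proper critical subset via Theorem~\ref{cl:main} --- is also a sharper characterization than anything the paper states. What your approach buys is an actual counterexample (modulo the realizability check you flag, which is easily discharged on any benchmark with a redundant measurement outside a minimal critical set); what the paper's buys is brevity, at the cost of only showing that its \emph{proof} does not yield minimality rather than that minimality can genuinely fail. Since the lemma is only used downstream through its criticality conclusion, both readings suffice for Theorem~\ref{th:p-k}, but your version is the more complete proof of the statement as written.
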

\begin{proof}
If $\mbox{rank}(F^S)=p-1$, then there exists $x \in \Comps^{p}$ such that $F^Sx=0$ with non-all zero $x_i$ for $1 \leq i \leq p$. By definition of $F$, vector $u=(H^\dagger H)^{-1}H^\dagger \begin{pmatrix} x \\ 0 \end{pmatrix}$ is such that $Hu= \begin{pmatrix}x \\ 0 \end{pmatrix} $. Hence, the set of measurements $\{z_1,...,z_p\}$ is critical. However, the constraint on vector $x$ is that it can have some zero values but it cannot be the zero vector, thus the set $S$ is not necessarily critical minimal. 
\end{proof}
By using the results of Lemmas ~\ref{cl:2} and~\ref{cl:3} on subsets of $S$ of size $p-s+1$ with $0\leq s \leq p$, we obtain that if rank$(F^S)=p-s$, then all subsets of size $p-s+1$ is critical and there is at least one set of size $p-s$ that is non critical. This proves theorem~\ref{th:p-k}.

\subsection{Proof of the Vulnerability Condition for Pairs of Sites using rectangular notations}\label{ap:c}
\begin{theorem}\label{db}
A pair of sites $S_1$ and $S_2$ with non-zero and non-critical measurements that are not vulnerable by themselves, are vulnerable together to undetectable TSAs if and only if there exist an $l \in \mathbb{C}^*$  such that $R^{S_1}z^{S_1}=lR^{S_2}z^{S_2}$.
\end{theorem}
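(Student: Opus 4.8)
The plan is to reduce the claim to item (a) of Theorem~\ref{th:double} and then transport the colinearity condition from $F$ to $R$ by exploiting that the two matrices share the same kernel. By item (a) of Theorem~\ref{th:double}, the pair $(S_1,S_2)$ is vulnerable together if and only if $F^{S_1}z^{S_1}$ and $F^{S_2}z^{S_2}$ are colinear; so it suffices to prove that $F^{S_1}z^{S_1}$ and $F^{S_2}z^{S_2}$ are colinear if and only if $R^{S_1}z^{S_1}$ and $R^{S_2}z^{S_2}$ are colinear, and in fact with the same proportionality constant $l$.

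First I would rewrite each site product as the full matrix applied to a zero-extended vector. For $k \in \{1,2\}$, let $w^{(k)} \in \mathbb{C}^m$ be the vector that agrees with $z_i$ for $i \in S_k$ and vanishes elsewhere; then $F^{S_k}z^{S_k} = F w^{(k)}$ and $R^{S_k}z^{S_k} = R w^{(k)}$, since the columns outside $S_k$ are multiplied by zero. The key algebraic fact I would extract from Theorem~\ref{lem:eq} is that $\ker F = \ker R$ as subspaces of $\mathbb{C}^m$: the chain $Fv = 0 \iff Rv = 0$ proven there is purely algebraic and holds for every $v \in \mathbb{C}^m$ (any vector of $\ker F$ already lies in the range of $H$, which is all the argument with $\lambda = j$ requires), not only for observed measurement vectors.

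Then I would transfer colinearity. As in the proof of Theorem~\ref{th:double}, the hypotheses that neither site is vulnerable by itself and that each site carries a non-zero, non-critical measurement force $F w^{(1)} \neq 0$ and $F w^{(2)} \neq 0$; by $\ker F = \ker R$ this gives at once $R w^{(1)} \neq 0$ and $R w^{(2)} \neq 0$. With all four vectors non-zero, colinearity of $F w^{(1)}$ and $F w^{(2)}$ means exactly that there is some $l \in \mathbb{C}^*$ with $F w^{(1)} = l\, F w^{(2)}$, i.e. $F(w^{(1)} - l\, w^{(2)}) = 0$. Applying $\ker F = \ker R$ to the single vector $w^{(1)} - l\, w^{(2)}$ converts this into $R(w^{(1)} - l\, w^{(2)}) = 0$, that is $R^{S_1}z^{S_1} = l\, R^{S_2}z^{S_2}$; the reverse implication is identical, so the two colinearity conditions are equivalent and the theorem follows.

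The step I expect to be the main obstacle is not the computation but the justification that Theorem~\ref{lem:eq} may be used as a statement about the common kernel of $F$ and $R$ over all of $\mathbb{C}^m$ rather than merely over observed measurement vectors, together with the bookkeeping that guarantees the same scalar $l$ is carried across, which is achieved by applying the kernel identity to the single combined vector $w^{(1)} - l\, w^{(2)}$ instead of to the two columns separately. Once the zero-extension identities $F^{S_k}z^{S_k} = F w^{(k)}$ and $R^{S_k}z^{S_k} = R w^{(k)}$ are in place, the equivalence is a one-line consequence of $\ker F = \ker R$.
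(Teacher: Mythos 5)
Your proof is correct, but it takes a genuinely different route from the paper's. The paper argues directly from the undetectability condition: it writes the attack difference $\Delta z$ with entries $(u_1-1)z^{S_1}$ and $(u_2-1)z^{S_2}$, invokes Theorem~\ref{lem:eq} to replace $r(\Delta z_{\Box})=0$ by $R\Delta z=0$, expands this as $(u_1-1)R^{S_1}z^{S_1}+(u_2-1)R^{S_2}z^{S_2}=0$, and reads off $l=-\frac{u_2-1}{u_1-1}$. You instead take item (a) of Theorem~\ref{th:double} as the starting point and transport the colinearity statement from $F$ to $R$ via the observation that $\ker F=\ker R$ over all of $\mathbb{C}^m$, applied to the single combined vector $w^{(1)}-l\,w^{(2)}$ so that the scalar $l$ is preserved. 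Your reading of Theorem~\ref{lem:eq} as a kernel identity is legitimate: its proof is purely algebraic and uses only that any $z$ with $Fz=0$ lies in the range of $H$, so it applies to arbitrary vectors of $\mathbb{C}^m$, not just observed measurements; and the non-vanishing of $F^{S_k}z^{S_k}$ (hence of $R^{S_k}z^{S_k}$) does follow from the non-vulnerability and non-criticality hypotheses via Theorem~\ref{th:single}. What your route buys is a clean separation of concerns: the equivalence between feasibility and colinearity (including the implicit fact that the required $l$ can be realized as $-\frac{u_2-1}{u_1-1}$ with $|u_1|=|u_2|=1$) is inherited wholesale from Theorem~\ref{th:double}, and the only new content of the appendix statement --- that $F$ may be replaced by $R$ --- is isolated as a one-line linear-algebra fact. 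The paper's route is shorter and self-contained but leaves that realizability step for the ``if'' direction implicit, whereas yours never has to touch it.
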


\begin{proof}
Undetectability is equivalent to $r(\Delta z_{\Box})=0$ with $\Delta z=\begin{pmatrix} (u_1-1) z^{S_1} & (u_2-1) z^{S_2} & 0 \end{pmatrix}^T$. Hence, by theorem~\ref{lem:eq}, the vulnerability condition is equivalent to $(u_1-1)R^{S_1}z^{S_1}+(u_2-1)R^{S_2}z^{S_2}=0$. In other words, the pair of sites is vulnerable if and only if there exist an $l=-\frac{u_2-1}{u_1-1} \in \mathbb{C}^*$ such that $|u_2|=|u_1|=1$ and $R^{S_1}z_{\Box}^{S_1}=lR^{S_2}z_{\Box}^{S_2}.$  Note that we can devide by $(u_1-1)$ because $u_1 = 1$ is the non-attack solution. 
\end{proof}

\subsection{Proof of Claim~\ref{cl:err}}
\begin{proof}
Recall the structure of the attack-angle matrix: $$W= \begin{bmatrix} |z_1|^2 f_{11} & \bar{z_1} z_2 f_{12} \\ z_1 \bar{z_2} f_{21} & |z_{2}|^2 f_{22} \end{bmatrix}, $$ where $f_{it}=(F^{S_i})^{\dagger} F^{S_t}$. Using this notation, it was shown in~\cite{rank1} that $IoS_{1,2}^*=\frac{1}{2}+ \frac{|f_{12}|}{2 \sqrt{f_{11} f_{22}}}.$ Our metric $ERR(F^{[S_1,S_2]})$ is equal to $IoS(X)$ and $X=(F^{[S_1,S_2]})^{\dagger}F^{[S_1,S_2]}=\begin{bmatrix} f_{11} & f_{12} \\ f_{21} & f_{22} \end{bmatrix}$. According to~\cite{rank1}, $IoS(X)= \frac{1}{2} + \frac{1}{2} \sqrt{1- 4\frac{\text{det}(X)}{\text{tr}(X)^2}},$ where det$(x)=f_{11}f_{22}-|f_{12}|^2$ and tr$(X)=f_{11}+f_{22}$ are the determinant and trace of $X$, respectively. Hence, 
\begin{align*}IoS(X) &= \frac{1}{2} + \frac{1}{2} \sqrt{1- 4\frac{f_{11}f_{22}-|f_{12}|^2}{(f_{11}+f_{22})^2}} \\
&= \frac{1}{2}+\frac{1}{2} \frac{\sqrt{(f_{11}+f_{22})^2-4f_{11} f_{22}+4|f_{12}|^2}}{f_{11}+f_{22}}.
\end{align*} Denoting $P=f_{11}f_{22} \in \mathbb{R}^+$, $D=|f_{12}| \in \mathbb{R}^+
$ and $S=f_{11}+f_{22}$, we have 
\begin{align*}
IoS_{1,2}^* &=\frac{1}{2}+ \frac{1}{2} \sqrt{\frac{D^2}{P}} \text{ and } \\
IoS(X) &=\frac{1}{2}+\frac{1}{2}\sqrt{\frac{S^2-4P+4D^2}{S^2}}.\\
\end{align*}
Given $P$ and $S$, values $f_{11}$ and $f_{22}$ exist if and only if $S^2 \geq 4P$.
Define function $\phi (x)= \frac{x-4P+4D^2}{x} = 1-4 \frac{P-D^2}{x} \leq 1$ because $P \geq D^2$. Because $S^2 \geq 4P$, the lowest value of the definition domain of $\phi (x)$ is $4P$: $\phi (4P)=\frac{D^2}{P}$. The derivative $\phi '(x)= \frac{P-D^2}{x^2}$ being positive implies that $\frac{S^2-4P+4D^2}{S^2} \geq \frac{D^2}{P}$, hence $1 \geq IoS(X) \geq IoS_{1,2}^*.$ The inequalities are equalities if $D^2=P$. If $P$ is much larger than $D^2$, then $IoS_{1,2}^*$ is much smaller than $1$ and if $S$ is much larger than $2 \sqrt{P}$, then $IoS(X)$ is close to $1$. In that case, an attack is feasible, $IoS_{1,2}^*$ doest not enable the detection of the vulnerability but $IoS(X)=ERR(F^{[S_1,S_2]})$ does.
This happens when $S >> 2 \sqrt{P}$, thus when 
\begin{align*}
f_{11}+f_{22} & >> 2 \sqrt{f_{11} f_{22}} \\
(f_{11}+f_{22})^2 &>> 4 f_{11} f_{22} \\
f_{11}^2-2f_{11}f_{22}+f_{22}^2 =(f_{11}-f_{22})^2 &>>0 \\
f_{11} >> f_{22} & \text{ or } f_{22} >>f_{11} 
\end{align*}
In other words, if the values of $F^{S_1}$ are much larger than the values of $F^{S_2}$ or vice-versa, then $ERR(F^{[S_1,S_2]})$ is close to $1$ but $IoS_{1,2}^*$ is not.
 \end{proof}

\end{document}